\documentclass[11pt]{article}
\pdfoutput=1
\usepackage{amsmath,amssymb,latexsym,epsfig,amsthm}
\setlength{\textwidth}{6.0in}
\setlength{\evensidemargin}{0.25in}
\setlength{\oddsidemargin}{0.25in}
\setlength{\textheight}{8.0in}
\setlength{\topmargin}{0in}
\usepackage{times}
\usepackage{color}
\usepackage{adjustbox}
\usepackage{enumerate}
\usepackage{appendix}
\usepackage[bottom]{footmisc}
\usepackage{algorithm}
\usepackage[noend]{algorithmic}
\usepackage[utf8]{inputenc}
\usepackage{authblk}

\usepackage{subcaption}
\usepackage{pgf,tikz}
\usepackage[pdfpagelabels,plainpages=false,hypertexnames=false]{hyperref}%
\hypersetup{%
   breaklinks,%
   ocgcolorlinks,
   colorlinks=true,%
   linkcolor=[rgb]{0.45,0.0,0.0},%
   urlcolor=[rgb]{0.05,0.390,0.0},%
   citecolor=[rgb]{0,0,0.45}
}%

\usetikzlibrary{arrows,shapes,calc,intersections,through,backgrounds}
\tikzset{faded/.style={color = black, densely dotted}}
\tikzset{help/.style={color = black, thin, loosely dotted}}
\tikzset{OPT/.style={color = red, thick, dashed}}
\definecolor{xdxdff}{rgb}{0.49,0.49,1}
\definecolor{ffqqqq}{rgb}{1,0,0}
\definecolor{qqqqff}{rgb}{0,0,1}
\definecolor{zzttqq}{rgb}{0.6,0.2,0}

\newtheorem{theorem}{Theorem}
\newtheorem{lemma}{Lemma}

\def\Real{\mathbb R}

\def\C{{\cal C}}
\def\Cint{{\cal C}^{\text{int}}}
\def\Cext{{\cal C}^{\text{ext}}}
\def\Cbad{{\cal C}^{\text{bad}}}
\def\Cnew{{\cal C}^{\text{new}}}
\def\D{{\cal D}}
\def\polygon{P}
\def\barsigma{\overline{\sigma}}
\def\barSigma{\overline{\Sigma}}
\def\n{n}
\def\conv{\text{Conv}}
\def\pb{\bar{p}}
\def\p{p}
\def\Sb{\bar{S}}

\def\Dnew{{\cal D}_{\text{new}}}
\def\Dbad{{\cal D}_{\text{bad}}}
\def\F{{\cal F}}
\def\B{{\cal B}}
\def\compcover{\text{compcover}}
\def\decompose{\text{decompose}}
\def\eps{\varepsilon}

\begin{document}

\begin{titlepage}
\title{Approximation Schemes for Partitioning: Convex Decomposition and 
Surface Approximation}

\author{Sayan Bandyapadhyay\thanks{sayan-bandyapadhyay@uiowa.edu} \qquad Santanu Bhowmick\thanks{santanu-bhowmick@uiowa.edu} \qquad Kasturi Varadarajan\thanks{kasturi-varadarajan@uiowa.edu}}
\affil{Department of Computer Science,\\ University of Iowa,\\ Iowa City, USA}
\maketitle

\begin{abstract}
We revisit two NP-hard geometric partitioning problems -- convex decomposition and surface approximation. Building on recent developments in geometric 
separators, we present quasi-polynomial time algorithms for these problems
with improved approximation guarantees. 
\end{abstract} 
\end{titlepage}

\section{Introduction}\label{sec:intro}
The size and complexity of geometric objects are steadily growing due to the technological advancement of the tools that generate these. A simple strategy to deal with large, complex objects is to model them using pieces which are easy to handle. However, one must be careful while applying this kind of strategy as such decompositions are costly to construct and may generate a plethora of components. In this paper, we study two geometric optimization problems which deal with representations of complex models using simpler structures.
  
\subsection{Polygon Decomposition Problem}
The problem of decomposing a polygon into simpler pieces has a wide range of applications in VLSI, Robotics, Graphics, and Image Processing. Decomposition of complex polygons into convex pieces make them suitable for applications like skeleton extraction, mesh generation, and many others \cite{motion_planning2,motion_planning1,skeletonization}. Considering the importance of this problem it has been studied for more than thirty years. Different versions of this problem have been considered based on the way of decomposition. The version we consider in this paper, which we refer to as the \textit{convex decomposition problem}, is defined as follows.

\paragraph{Convex Decomposition Problem:} Let $P$ be a polygon possibly with polygonal holes. A \textit{diagonal} in such a polygon is a line segment connecting two non-consecutive vertices that lies inside the polygon. The \textit{Convex Decomposition Problem} is to add a set of non-crossing diagonals so that each subpolygon in the resulting decomposition of $P$ is convex, and the number of subpolygons is minimized.

For the polygons without holes, this problem can be solved in $O(r^2n\log n)$ time using a dynamic programming based approach, where $r$ is the number of reflex vertices \cite{Keil85}. Additionally, the running time could be improved to $O(n+ r^2\min\{r^2,n\})$ \cite{keils02}. However, the problem is $\mathcal{NP}$-Hard if the polygon contains holes \cite{nphard82}. Hence Chazelle \cite{cd_holes} gave a 4.333 factor approximation algorithm by applying a separator theorem recursively. Later, Hertel and Mehlhorn \cite{cd_4factor} improved the approximation factor to 4 by applying a simple strategy based on triangulation.

Many other versions of the polygon decomposition problem have also been studied. One of these versions allows the algorithm to add additional points inside the polygon. The endpoints of the line segments which decompose the polygon can be chosen from these additional points and the polygon vertices. Considering this version Chazelle and Dobkin~\cite{simple_cd} have designed an $O(n^3)$ time optimal algorithm for the simple polygons without holes. Later, they improved the running time to $O(n+r^3)$, where $r$ is the number of reflex angles \cite{chazzele_dobkin2,Chazzele_thesis}. But, the problem is still $\mathcal{NP}$-Hard for polygons with holes \cite{nphard82}. In a different version the polygon is decomposed into \textit{approximately} convex pieces where concavities are allowed within some specified tolerance \cite{ACD1,ACD4,ACD2,ACD3}. The idea is that these approximate convex pieces can be computed efficiently and can result in a smaller number of pieces. Another interesting variant is to consider an additional set of points inside the polygon and try to find the minimum number of convex pieces such that each piece contains at most $k$ such points, for a given $k$ \cite{irina_thesis}.

\subsection{Surface Approximation Problem}
In many scientific disciplines including Computer Graphics, Image Processing, and Geographical Information System (GIS), surfaces are used for representation of geometric objects. Thus modeling of surfaces is a core problem in these areas, and polygonal descriptions are generally used for this purpose. However, considering the complexity of the input objects the goal is to use a minimal amount of polygonal description. 

In many scientific computations 3D-object models are used. In that case the surface is approximated using piecewise linear patches (i.e, polygonal objects) whose vertices are allowed to lie within a close vicinity of the actual surface. To ensure that the local features of the original surface are retained, one may end up generating an unmanagable number of patches. But, this is not at all cost effective for real time applications. Thus one can clearly note the complexity-quality tradeoff in this context.

We consider the surface approximation problem for $xy$-monotone surfaces. The original surface is the graph of a continuous bivariate function $f(x,y)$ whose domain is $\Real^2$. The goal is to compute a piecewise-linear function $g(x,y)$ which approximate the function $f(x,y)$. The domain of $g(\cdot,\cdot)$ is also
$\Real^2$, and we wish to minimize the number of its faces, which we require
to be triangles. We formally define the problem as follows.

\paragraph{Surface Approximation Problem:} Let $f$ be a bivariate function and $\Sb$ be a set of $n$ points sampled from $f$. Given $\mu > 0$ a piecewise linear function $g$ is called an $\mu$-\textit{approximation} of $f$ if $$|g(x_i,y_i)-z_i| \leq \mu$$ for every point $(x_i,y_i,z_i) \in \Sb$. The surface approximation problem is to compute, given $f$ and $\mu$, such a $g$ with minimum complexity, where the complexity of a piecewise linear surface is defined to
be the number of its faces.

Considering its importance there has been a lot of work on this problem in computer graphics and image processing \cite{heur1,heur2,heur3,heur4}. However, most of these approaches are based on heuristics and don't give any guarantees on the solution. There are two basic techniques which are used by these algorithms: \textit{refinement} and \textit{decimation}. The former method starts with a triangle and further refine it locally until the solution becomes an $\mu$-\textit{approximation}. The latter starts with a triangulation and coarsens it locally until one can't remove a vertex \cite{heur5,heur6,heur7}. 

The first provable guarantees for this problem were given by Agarwal and Suri. They gave an algorithm which computes an approximation of size $O(c\log c)$, and runs in $O(n^8)$ time, where $c$ is the complexity of the optimal $\mu$-\textit{approximation} \cite{surface_approx_suri}. They also proved that the decision version of this problem is $\mathcal{NP}$-complete. In a later work Agarwal and Desikan \cite{surface_approx_desikan} presented a randomized algorithm which computes an approximation of size $O(c^2\log^2 c)$ in $O(n^{2+\delta}+c^3\log^2 c\log \frac{n}{c})$ expected time, where $c$ is the complexity of the optimal $\mu$-\textit{approximation} and $\delta$ is any arbitrary small positive number. 

A different version of the surface approximation problem has also considered by Mitchell and Suri \cite{MitchellSuri}. Given a convex polytope $P$ with $n$ vertices and $\mu > 0$ they designed an $O(n^3)$ time algorithm which computes a $O(c\log n)$ size convex polytope $Q$ such that $(1-\mu)P \leq Q \leq (1+\mu)P$, where $c$ is the size of such an optimal polytope. Later, this bound was improved independently by Clarkson \cite{Clarkson} and Bronniman and Goodrich \cite{Bronnimann}. The latter presented an $O(nc(c+\log n)\log \frac{n}{c})$ time algorithm which computes a polytope $Q$ of size $O(c)$. The hardness bound for this version is still an open problem. 

\subsection{Our Results}
We obtain a quasi-polynomial time approximation scheme (QPTAS) for the convex decomposition problem using a separator based approach. We show the existence of a suitable set of diagonals (our separator), which partitions the optimal solution in a balanced manner, and intersects with a small fraction of optimal solution. Moreover, the set of diagonals can be guessed from
a quasi-polynomial sized family. We then show the existence of a near-optimal
solution that respects the set of diagonals. As we explain below, this proof of
the existence of a suitable diagonal set, and a near-optimal solution that respects it, are our main technical contribution.  

The approximation scheme is now a straightforward application: guess a separator, recurse on the subpolygons. The recursion bottoms out when we reach a subpolygon for which the optimal convex decomposition has a small size. This base case can be detected and solved in quasi-polynomial time by an exhaustive search. 

Our result builds on the recent breakthrough due to Adamaszek and Wiese \cite{AdamaszekW13,AdamaszekW14}. They presented a QPTAS for independent set of weighted axis parallel rectangles \cite{AdamaszekW13}, and subsequently extended their approach to polygons with polylogarithmic many vertices \cite{AdamaszekW14}. 
Har-peled \cite{Har-Peled13} simplified and generalized their result to polygons of arbitrary complexity. Mustafa \textit{et al.} \cite{MRS14} also describe a simplification, other generalizations, and an application to obtain a QPTAS for computing a minimum weight set cover using pseudo-disks.

For our problem, these results \cite{AdamaszekW14,Har-Peled13,MRS14} imply a separator that
intersects a small number of convex polygons in the optimal decomposition and
partitions the remaining convex polygons evenly. However, such a separator
may pass through the holes of polygon, and its intersection with the polygon
may not be a set of diagonals. How do we convert the separator into a set
of diagonals that still partitions nicely? And how do we convert the optimal
decomposition into a near-optimal decomposition that respects this diagonal set?
These are the key questions we address in our work.
 
For the surface approximation problem, we describe a 
quasi-polynomial time algorithm that computes a 
surface whose complexity is within a multiplicative constant factor 
of the optimal surface. The main contribution is a reduction from the surface
approximation problem to a planar problem of computing a disjoint set cover using a certain family of triangles. We design a QPTAS for the disjoint set cover problem using the separator based approach. While similar reductions have been used in previous work on this problem, our reduction increases the size of the solution by at most a (multipilicative) constant factor. Our family of triangles 
has a useful closure property that facilitates the working of the QPTAS. In
this way, we get a quasi-polynomial time $O(1)$-factor approximation algorithm for the surface approximation problem.

\section{Convex Decomposition}

Recall that we are given a polygon $\polygon$ with zero or
more polygonal holes. A diagonal in such a polygon is a line segment between
two non-adjacent vertices that lies entirely within the polygon. The problem is to add a set of non-crossing
diagonals so that each subpolygon in the resulting decomposition of $\polygon$
is convex, and the number of convex polygons is the minimum possible.

\subsection{A Separator}
A set $D$ of diagonals of polygon $\polygon$ is said to be conforming for
$\polygon$ if no two diagonals in $D$ cross. The diagonals in $D$ naturally
partition $\polygon$ into a set of polygons $\{P_1, P_2, \ldots, P_t\}$.  

Let $K(\polygon)$ denote the number of convex polygons in an optimal 
diagonal-based convex decomposition of $\polygon$.

\begin{lemma}
\label{lem:sep1}
Let $\polygon$ be a polygon (possibly with holes) that has $n$ vertices, let
$K = K(\polygon)$, and $0 < \delta < 1$ be a parameter. If $K \geq \frac{ c \log (1/\delta)}{\delta^3}$,
where $c$ is a sufficiently large constant, 
we can compute, in $n^{O(1/\delta^2)}$ time, a family 
$\D = \{ D_1, D_2, \ldots, D_t \}$ of conforming sets of diagonals of 
$\polygon$ with the following property: there exists a $1 \leq i \leq t$
such that the diagonals in $D_i$ partition $\polygon$ into polygons 
$P_1, P_2, \ldots, P_s$, so that (a) $K(P_j) \leq (2/3 + \delta) K$, and (b) $\sum_{j=1}^s K(P_j) \leq (1 + \delta)  K$.
\end{lemma}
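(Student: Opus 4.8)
The plan is to invoke the polygon separator machinery of Adamaszek–Wiese / Har-Peled / Mustafa et al.\ on the optimal convex decomposition of $\polygon$, and then to "repair" the resulting curve so that it becomes a conforming set of diagonals without inflating the cost by more than a $(1+\delta)$ factor. Let $\{Q_1,\dots,Q_K\}$ be an optimal diagonal-based convex decomposition of $\polygon$. Treating these $K$ convex polygons as the ground set, the cited separator results give a closed curve (in fact, the boundary of a region bounded by $O(1/\delta^2)$ axis-parallel segments, so guessable from a family of size $n^{O(1/\delta^2)}$) that crosses at most $O(\delta \cdot K)$ of the $Q_i$'s — more precisely at most $\delta K$ after absorbing constants into the threshold $K \geq c\log(1/\delta)/\delta^3$ — and splits the remaining $Q_i$'s into two groups, each of size at most $(2/3)K$. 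First I would fix this curve $\gamma$ and let $\Cbad$ be the set of $Q_i$ it crosses, with $|\Cbad| \leq \delta K / 3$ say.

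The main obstacle is that $\gamma$ is not a union of diagonals of $\polygon$: it can pass through holes, and even where it runs inside $\polygon$ it is a polygonal curve through the interior rather than a chord between vertices. The key idea for the repair is that $\gamma$ only interacts with the decomposition through the convex pieces in $\Cbad$ together with the holes of $\polygon$ it traverses. Within each bad piece $Q_i$, the portion of $\gamma$ inside $Q_i$ is a curve with endpoints on $\partial Q_i$; I would replace it (and re-triangulate or re-convex-decompose $Q_i$ along the way) using only vertices of $\polygon$ — in the worst case, subdividing $Q_i$ into a constant number of convex pieces costs a constant factor per bad piece, hence $O(\delta K)$ extra pieces total, which after rechoosing constants is at most $\delta K$. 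For the holes: each maximal sub-arc of $\gamma$ that enters and exits a hole $h$ can be rerouted along $\partial h$ and then replaced by a diagonal connecting the two vertices of $\polygon$ where $\gamma$ meets the two convex pieces flanking $h$; because the separator curve has only $O(1/\delta^2)$ edges it visits only $O(1/\delta^2)$ holes, and these reroutings together add only $O(1/\delta^2)$ pieces, absorbed into $\delta K$ by the lower bound on $K$. The output of this surgery is the conforming diagonal set $D_i$, and the family $\D$ is obtained by ranging over all guesses for $\gamma$ and all (constant-complexity) local repair choices.

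Having produced $D_i$, I would verify the two counting claims. For (b): the convex pieces $Q_j \notin \Cbad$ each lie entirely on one side of $D_i$ and so contribute to exactly one $P_\ell$; summing $K(P_\ell)$ over $\ell$, the untouched pieces contribute exactly $K - |\Cbad|$, while the surgery on the bad pieces and holes contributes the $O(\delta K) + O(1/\delta^2)$ extra convex pieces counted above, so $\sum_\ell K(P_\ell) \leq K + \delta K = (1+\delta)K$ after the constant bookkeeping — here one uses $K(P_\ell)$ is subadditive, i.e.\ gluing back the decomposition witnesses $K(P_\ell) \le$ (number of pieces of the global solution restricted to $P_\ell$). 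For (a): each $P_\ell$ is contained in one of the two sides of $\gamma$; a side receives at most $(2/3)K$ of the untouched pieces plus at most $\delta K$ of the new pieces, giving $K(P_\ell) \leq (2/3 + \delta)K$. The step I expect to be genuinely delicate is the hole-rerouting: one must argue that the diagonals introduced are non-crossing both with each other and with the diagonals introduced inside the bad pieces, and that rerouting along $\partial h$ does not accidentally swallow convex pieces that were not in $\Cbad$ (which would break the balance in~(a)) — controlling this is where the $O(1/\delta^2)$ bound on the complexity of $\gamma$, and hence on the number of holes touched, is essential.
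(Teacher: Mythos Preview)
Your overall architecture matches the paper's, but the repair step has a genuine gap. You assert that because $\gamma$ has $O(1/\delta^2)$ edges it visits only $O(1/\delta^2)$ holes, and hence the reroutings contribute only $O(1/\delta^2)$ extra pieces, absorbed by the lower bound on $K$. This is false: a single edge of $\gamma$ can cross arbitrarily many holes of $\polygon$ (imagine a row of small holes pierced by one horizontal segment). So the number of hole-crossings, and hence the number of reroutings and the number of diagonals you must introduce, is not controlled by $1/\delta^2$ at all. Your proposed replacement ``reroute along $\partial h$ and then replace by a diagonal'' is also underspecified: which diagonal, between which vertices of $\polygon$, and why is that segment interior to $\polygon$ and non-crossing with the other diagonals you introduce? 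You flag this as ``delicate,'' but in fact your current accounting simply does not close.

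The paper's fix is a single clean idea you are missing: after clipping $\Sigma$ to $\polygon$ and snapping fragment endpoints to polygon vertices, replace each fragment $\sigma$ by the \emph{shortest path in $\polygon$ homotopic to $\sigma$}. Such a geodesic is automatically a concatenation of diagonals and edges of $\polygon$; homotopy preserves the inside/outside separation of $\Cint$ and $\Cext$; shortest homotopic representatives of non-crossing curves are non-crossing, so the resulting diagonal set is conforming; and tightening a curve can only decrease the number of components of its intersection with any convex region, so $\sum_d n(d) \le \sum_{e\in\Sigma} n(e) \le \delta K/30$. The number of diagonals in $D(\Sigma)$ may be far larger than $O(1/\delta^2)$, but this does not matter: $D(\Sigma)$ is a deterministic function of $\Sigma$, so the family $\D$ still has size $n^{O(1/\delta^2)}$, and the charge for the near-optimal decomposition respecting $D(\Sigma)$ is governed by $\sum_d n(d)$, not by $|D(\Sigma)|$. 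A secondary point: you also need to say why the vertices of $\gamma$ come from a set of size $\mathrm{poly}(n)$; in the paper this is arranged by building $\Sigma$ from a trapezoidal decomposition of \emph{representative segments} (leftmost--rightmost chords of the $C_i$, which are edges or diagonals of $\polygon$), so each vertex of $\Sigma$ is specified by $O(1)$ features of $\polygon$.
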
  

The rest of the section is devoted to the proof of the lemma, which proceeds
in three major steps. Fix an optimal diagonal-based convex decomposition of
$\polygon$, and let $\C = \{ C_1, \ldots, C_K \}$ be the set of resulting
convex polygons. In the first step, we argue that there is a polygonal
cycle $\Sigma$ such that (a) the number of polygons in $\C$ that are 
contained inside (resp.\ outside) $\Sigma$ is at most $2K/3$, and (b) the 
number of polygons in $\C$ whose interiors are intersected by $\Sigma$ is
at most $\delta K/30$. This step is similar to the constructions in 
\cite{AdamaszekW14,Har-Peled13,MRS14}, but we need to review some particulars of these constructions to
identify information that will be needed in subsequent steps. Notice that
$\Sigma$ may actually intersect the holes of the polygon $\polygon$.

In the second step, we show how to convert $\Sigma$ into a conforming set
of diagonals $D$ such that as far as being a separator for $\C$ is concerned,
$D$ behaves like a proxy for $\Sigma$. As a result of this process, the 
number of diagonals in $D$ may be significantly larger than the number of
edges in $\Sigma$. Nevertheless, $D$ is an easily computed function of
$\Sigma$.

In the third step, we exploit the separator properties of $D$ to compute,
from $\C$, a suitable convex decomposition of $\polygon$ that respects the 
diagonals in $D$. With this overview, we are ready to describe the three
steps.

\paragraph{Step 1:} We have already fixed an optimal diagonal-based convex decomposition of
$\polygon$, with $\C = \{ C_1, \ldots, C_K \}$ being the set of resulting
convex polygons. Without loss of generality, assume that no two vertices
of $\polygon$ lie on a vertical line. For each convex polygon $C_i$, let
$s_i$ denote the line segment connecting the leftmost and rightmost points
of $C_i$. We call $s_i$ the {\em representative} segment of $C_i$. Notice that
$s_i$ is either an edge of $\polygon$ or a diagonal of $\polygon$. Let
$S = \{s_1, s_2, \ldots, s_K\}$. See parts (a) and (b) of Figure~\ref{fig:step1}.

\begin{figure}[H]
\tikzstyle{trapezoid} = [dotted, semithick]
\tikzstyle{diags} = [dashed]
\tikzstyle{sample} = [blue, thick]
\tikzstyle{separator} = [red,dashdotted,thick]
\centering
  \begin{subfigure}{0.45\textwidth}
  \centering
  \resizebox{\linewidth}{!} {   
   \begin{tikzpicture}
       \coordinate (a) at ( 3, 6);
       \coordinate (b) at ( 2, 1);
       \coordinate (c) at (11, 1);
       \coordinate (d) at ( 9, 7);

       \coordinate (e) at ( 5 , 5);
       \coordinate (f) at (5.5, 4);
       \coordinate (g) at (4.5, 4);
       \coordinate (h) at (4.11,3);
       \coordinate (i) at (5.56,3);
       \coordinate (j) at (4.6, 2);
       \coordinate (k) at (7.5, 2);
       \coordinate (l) at (6.5, 3);
       \coordinate (m) at ( 8, 3);
       \coordinate (n) at (7.38, 4);
       \coordinate (o) at (6.5, 4);
       \coordinate (p) at (7  , 5);

       \coordinate (q) at (1,  0);
       \coordinate (t) at (1,  8);
       \coordinate (r) at (12, 8);
       \coordinate (s) at (12, 0);
       
       \draw (a) -- (b) -- (c) -- (d) -- (a);
       \draw (e) -- (f) -- (g) -- (h) -- (i) -- (j) -- (k) -- (l) -- (m) -- (n) -- (o) -- (p) -- (e);
       \draw[dashed] (a) -- (e); 
       \draw[dashed] (a) -- (g);
       \draw[dashed] (b) -- (h);
       \draw[dashed] (b) -- (j);
       \draw[dashed] (c) -- (k);
       \draw[dashed] (c) -- (m);
       \draw[dashed] (d) -- (p);
       \draw[dashed] (d) -- (n);
       \draw[opacity = 0] (q) -- (t) -- (r) -- (s) -- (q);
   \end{tikzpicture}
   }
   \caption {Polygon $P$ (with hole) along with its optimal convex decomposition
      $\mathcal{C}$, the latter denoted using dashed lines.
   }
  \end{subfigure}
  \hfill
  \begin{subfigure}{0.45\textwidth}
  \centering
  \resizebox{\linewidth}{!} {   
  \begin{tikzpicture}
      \draw[sample] (a) -- (d);
      \draw[sample] (b) -- (c);
      \draw[sample] (b) -- (i);
      \draw[sample] (d) -- (o);
      \draw[diags] (a) -- (f);
      \draw[diags] (b) -- (g);
      \draw[diags] (c) -- (l);
      \draw[diags] (c) -- (n);

      \draw[opacity = 0] (q) -- (r) -- (s) -- (t) -- (q);
  \end{tikzpicture}
  }
  \caption {The representative segment $s_i$ of each polygon $C_i \in
     \mathcal{C}$ is shown. Segments shown using solid lines are the ones 
     chosen in the sample $R$.
  }
  \end{subfigure}
  \vspace{1cm}
  \tikzstyle{vertex} = [circle, fill=black, minimum size=4pt, inner sep=0pt]
  \begin{subfigure}{0.45\textwidth}
  \centering
  \resizebox{\linewidth}{!} {   
  \begin{tikzpicture}
      \coordinate (u_1) at (2,   8);
      \coordinate (t_1) at (2,   0);
      \coordinate (z_1) at (5.5, -7);
      \coordinate (s_1) at (11, 0);
      \coordinate (v_1) at (3, 8);
      \coordinate (r_1) at (3, 1.56);
      \coordinate (u)   at (5.56, 6.5);
      \coordinate (v)   at (6.5,  6.6);
      \coordinate (z)   at (5.56, 1);
      \coordinate (q_1) at (6.5, 1);
      \coordinate (w_1) at (9, 8);
      \coordinate (w) at (9, 1);
      \coordinate (s_1) at (11, 8);
      \coordinate (z_1) at (11, 0);

      \draw[sample] (a) -- (d);
      \draw[sample] (b) -- (c);
      \draw[sample] (b) -- (i);
      \draw[sample] (d) -- (o);

       \draw (q) -- (t) -- (r) -- (s) -- (q);

      \draw[trapezoid] (u_1) -- (t_1);
      \draw[trapezoid] (i) -- (u);
      \draw[trapezoid] (o) -- (v);
      \draw[trapezoid] (o) -- (q_1);
      \draw[trapezoid] (c) -- (s_1);
      \draw[trapezoid] (d) -- (w_1);
      \draw[trapezoid] (d) -- (w);
      \draw[trapezoid] (c) -- (z_1);

      \draw[separator] (i) -- (z);
      \draw[separator] (o) -- (q_1);
      \draw[separator] (o) -- (d);
      \draw[separator] (d) -- (w_1);
      \draw[separator] (v_1) -- (r_1);
      \draw[separator] (v_1) -- (w_1);
      \draw[separator] (r_1) -- (i);
      \draw[separator] (q_1) -- (z);

      \node[vertex, label ={-80:{a}}] at (v_1) {$ $}; 
      \node[vertex, label ={-80:{b}}] at (r_1) {$ $}; 
      \node[vertex, label ={-80:{c}}] at (i) {$ $}; 
      \node[vertex, label ={-80:{d}}] at (z) {$ $}; 
      \node[vertex, label ={-80:{e}}] at (q_1) {$ $}; 
      \node[vertex, label ={-80:{f}}] at (o) {$ $}; 
      \node[vertex, label ={-80:{g}}] at (d) {$ $};
      \node[vertex, label ={-80:{h}}] at (w_1) {$ $};

  \end{tikzpicture}
  }
  \caption {Trapezoidal decomposition, shown in dotted lines,  of the segments in $R$. The separator $\Sigma$ is the dashed polygonal cycle $a$-$b$-$c$-$d$-$e$-$f$-$g$-$h$.
  }
  \end{subfigure}
  \hfill
  \begin{subfigure}{0.45\textwidth}
  \centering
  \resizebox{\linewidth}{!} {   
  \begin{tikzpicture}
       \draw (a) -- (b) -- (c) -- (d) -- (a);
       \draw (e) -- (f) -- (g) -- (h) -- (i) -- (j) -- (k) -- (l) -- (m) -- (n) -- (o) -- (p) -- (e);
       \draw[separator] (i) -- (z);
       \draw[separator] (o) -- (q_1);
       \draw[separator] (o) -- (d);
       \draw[separator] (d) -- (w_1);
       \draw[separator] (v_1) -- (r_1);
       \draw[separator] (v_1) -- (w_1);
       \draw[separator] (r_1) -- (i);
       \draw[separator] (q_1) -- (z);

       \draw[opacity = 0] (q) -- (t) -- (r) -- (s) -- (q);
  \end{tikzpicture}
  }
  \caption {The seperator $\Sigma$ and the polygon $\polygon$.}
  \end{subfigure}
\caption{Step 1 of the proof of Lemma \ref{lem:sep1}}
\label{fig:step1}
\end{figure}
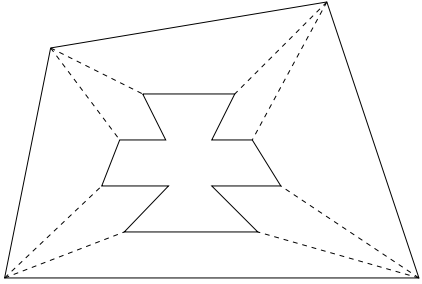
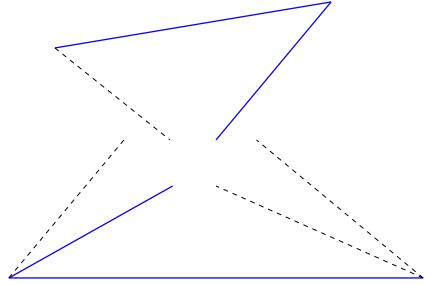
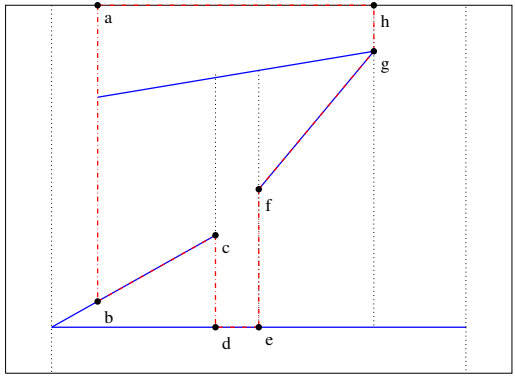
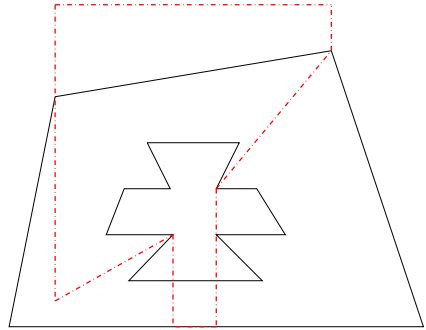

Let us fix an axes parallel box $B$ that contains $\polygon$. Let us pick
a random subset $R \subseteq S$ of size $\lceil r \log r \rceil \leq K$, 
where $r$ is a parameter that we fix below. We compute the 
{\em trapezoidal decomposition}, restricted to $B$,
of the segments in $R$. (See Figure~\ref{fig:step1}(c).) That is, from each endpoint $p$ of a
segment in $R$, we shoot a vertical ray upwards (resp. downwards) till it hits
either one of the other segments in $R$ or the boundary of $B$. We refer to the point
that the ray hits as $u_p$ (resp. $d_p$). The vertical segments
thus generated partition $B$ into faces, each of which is a trapezoid. The 
trapezoidal decomposition can be viewed as a planar graph whose faces correspond to the trapezoids. The vertices of
this planar graph are the endpoints of segments in $R$, the vertices of $B$, and the
points of the form $u_p$ or $d_p$. There are two types of edges -- {\em non-vertical}
and {\em vertical}. An edge that is contained within a segment $s \in R$ is a line 
segment connecting two consecutive vertices that lie in $s$. Similarly, there is an 
edge connecting every two consecutive vertices on the top (and bottom) edge of $B$. 
These edges constitute the non-vertical type. 
The vertical edges include ones of the form $\overline{pu_p}$ (resp. $\overline{pd_p}$), where $p$ is an endpoint
of a segment in $S$. The left and right edges
of $B$, which will also be edges of the planar graph, are included in the vertical category as well.
The number of vertices, edges, and faces of the trapezoidal decomposition is $O(r \log r)$.

Notice that a nonvertical edge lying on the boundary of $B$ does not intersect any convex polygon
in $\C$. Any other nonvertical edge intersects the interior of only the convex polygon
whose representative segment it lies on. No nonvertical edge intersects the interior of
any hole of $\polygon$. A vertical edge, on the other hand, can intersect the interior of several convex
polygons in $\C$ as well as the interiors of several holes. Note however, that a vertical edge
intersects the interior of a convex polygon in $\C$ if and only if it intersects the relative interior of the
corresponding representative segment. By standard sampling theory, there
exists a choice of $R$, such that the number of convex polygons in $\C$ (representative segments
in $S$) intersected by any vertical edge in the trapezoidal decomposition of $R$ is at most $c_1K/r$, for some constant $c_1 > 0$. 
We will assume henceforth that $R$ satisfies this property.

Let $\n(e)$ denote the number of convex polygons whose interiors are intersected by edge $e$ of
the trapezoidal decomposition. 

Recall that the trapezoidal decomposition of $R$ is a planar graph with
$O(r \log r)$ vertices, edges, and faces. The articles \cite{AdamaszekW14,Har-Peled13,MRS14} study separators
which are simple polygonal cycles whose edges are the edges of the planar graph.
In particular, the arguments of \cite{AdamaszekW14,Har-Peled13,MRS14} imply that there is a simple polygonal cycle
$\Sigma$ in the plane, constituted of  $O(\sqrt{r \log r})$ edges of the trapezoidal decomposition,
such that (a) the number of representative segments (convex polygons) in the interior of
$\Sigma$ is at most $2K/3$, and (b) the number of representative segments (convex polygons) in
the exterior of $\Sigma$ is at most $2K/3$. See Figure~\ref{fig:step1}(d). Let $\Cint$ (resp. $\Cext$) denote the subset consisting
of those polygons of $\C$ in the interior (resp. exterior) of $\Sigma$. Abusing notation, 
we say $e \in \Sigma$ to mean that $e$ is an edge of the trapezoidal decomposition that is
contained in $\Sigma$. We have that $\sum_{e \in \Sigma} n(e) \leq O(\sqrt{r \log r}) \frac{c_1K}{r}$.
We will choose $r$ large enough so that 

\begin{equation}
\label{eq:sigma}
30 \sum_{e \in \Sigma} n(e) \leq \delta K.
\end{equation} 

This can be ensured by, say, setting $r = c/\delta^3$ for sufficiently large constant $c$. Then $\Sigma$
would be constituted of $O(1/\delta^2)$ edges of the trapezoidal decomposition. Each vertex of 
such an edge is specified by a tuple consisting of $O(1)$ features of the input polygon -- note
that a vertex of the form of $u_p$ or $d_p$ is specified by $p$ and a representative segment,
which is a diagonal or edge of the input polygon. Thus $\Sigma$ can be specified by $O(1/\delta^2)$ such
tuples. This implies that there is an algorithm, that, given $\polygon$, computes in $n^{O(1/\delta^2)}$ time a 
family of $n^{O(1/\delta^2)}$ cycles that contains a $\Sigma$ satisfying (\ref{eq:sigma}).        
 
\paragraph{Step 2:} We delete from $\Sigma$ the portions that lie in the interiors of the holes (this includes
the unbounded hole outside $\polygon$ as well). That is, we consider $\Sigma \cap \polygon$. We
further partition each connected component of $\Sigma \cap \polygon$ 
using the vertices of $\polygon$ that lie in the relative interior of the
component. See Figure~\ref{fig:step2-1}.
This partitions $\Sigma \cap \polygon$ into {\em fragments}, which are polygonal chains. An endpoint of such
a chain is either a vertex of $\polygon$ or a point $q$ that lies in the interior of an edge $f$ of the
polygon. Let $\Sigma'$ denote the resulting collection of fragments. Each fragment $\sigma$ contains at most two
vertical edges (which would be portions of vertical edges in $\Sigma$) and at most one nonvertical
edge (which would be part of some representative segment, and made up of one or more edges of the trapezoidal decomposition that are contiguous on that segment). For a convex polygon $C \in \C$, 
let $n(\sigma,C)$ denote the number of connected components of $\sigma \cap (\text{interior}\  C)$. The quantity
$n(\sigma,C)$ is either $0$, $1$, or $2$ -- we can get two components if $\sigma$ actually has
two vertical edges that both intersect $C$. Let $n(\sigma) = \sum_{C \in \C} n(\sigma,C)$.

\begin{figure}[hbt]
\centering
\begin{tabular}{c@{\hspace{0.1\linewidth}}c}
\includegraphics[scale=0.3]{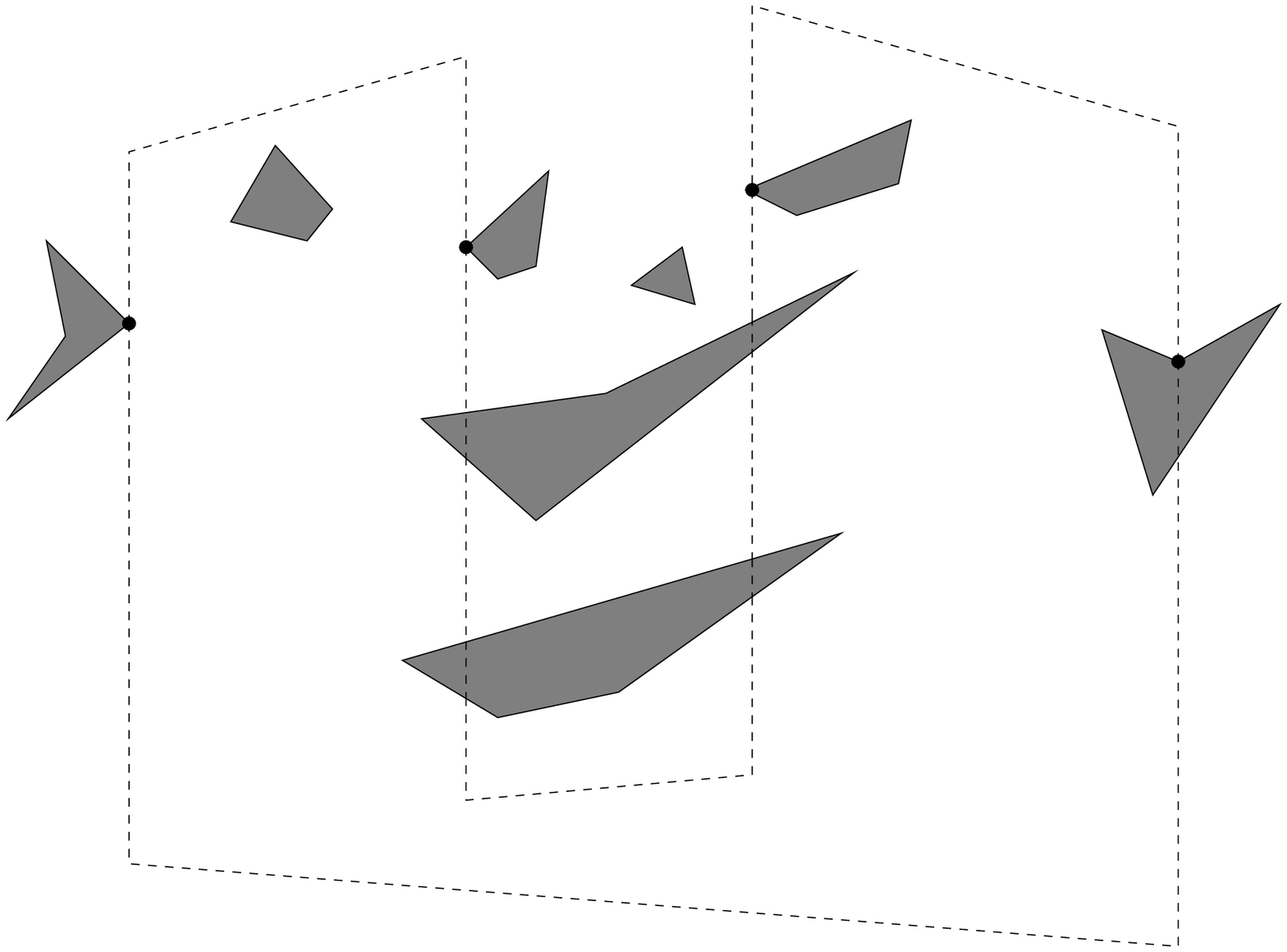} &
\includegraphics[scale=0.3]{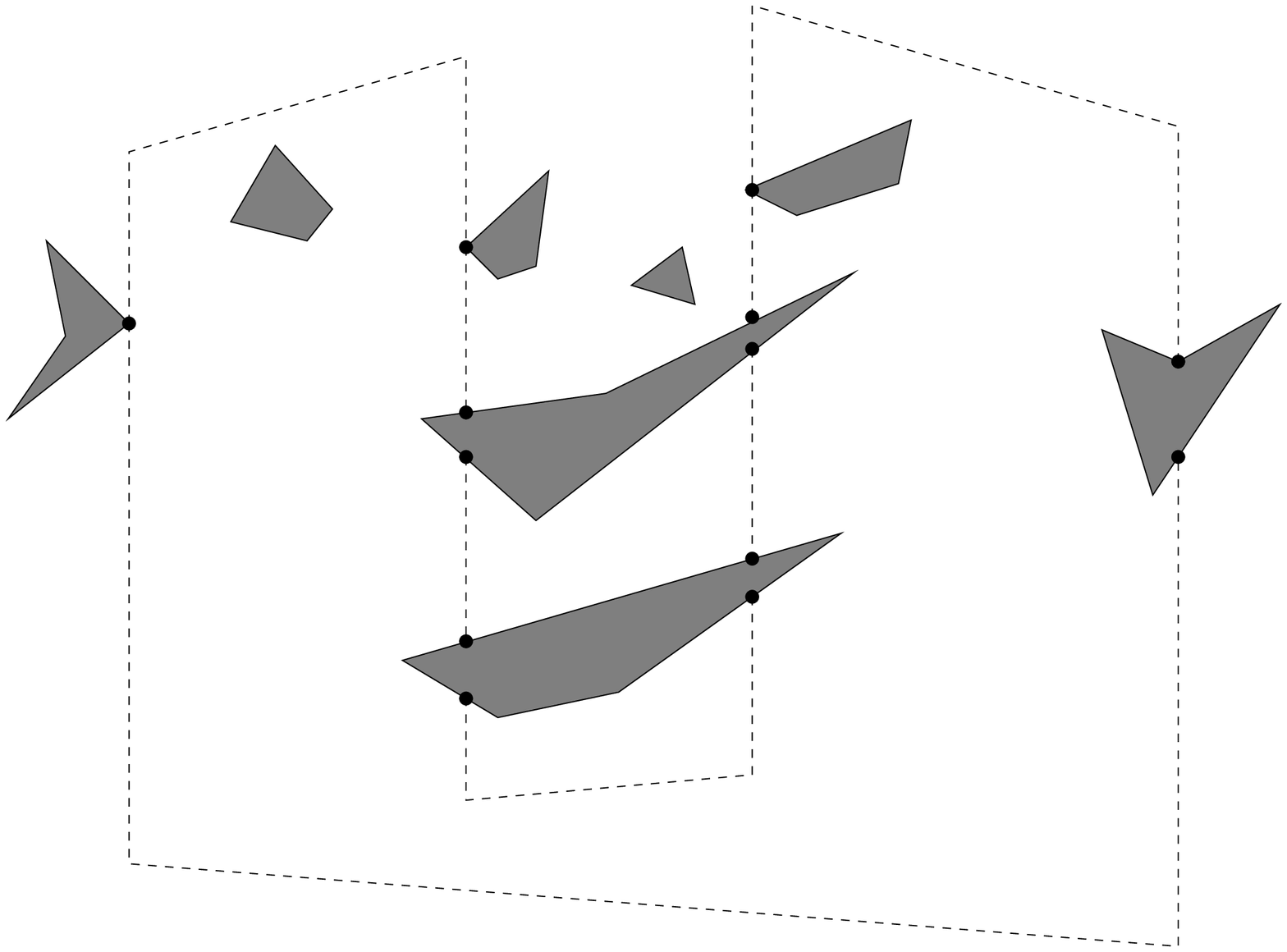}
\end{tabular}
\caption{(a) The separator $\Sigma$ which may pass through polygon holes, which
are shaded. (b) The fragments in $\Sigma \cap \polygon$.}
\label{fig:step2-1}
\end{figure}

We slightly modify fragment $\sigma$ so that both its endpoints are vertices of $\polygon$: if an
endpoint of $\sigma$ is a point $p$ on the interior of edge $f$ of the polygon $\polygon$, we extend
$\sigma$  by adding the segment from $p$ to the {\em left} endpoint of $f$. See
Figure~\ref{fig:step2-2}. Note that $n(\sigma,C)$
remains unchanged as a consequence of this. The set $\Sigma'$ of fragments now satisfies the following
properties:

\begin{enumerate}
\item $\sum_{\sigma \in \Sigma'}n(\sigma) \leq \sum_{e \in \Sigma} n(e) \leq \delta K/30$.
\item Each fragment $\sigma \in \Sigma'$ begins and ends at a vertex of $\polygon$ and contains
      no vertex of $\polygon$ in its relative interior.
\item No fragment in $\Sigma'$ intersects the interior of any convex polygon in $\Cint \cup \Cext$.
\item The fragments in $\Sigma'$ partition 
      $\polygon$ into connected components with the property that no component contains a    
      polygon from $\Cint$ as well as a polygon from $\Cext$. (That is, a 
component may contain polygons from $\Cint$, or polygons from $\Cext$, but
not polygons from both $\Cint$ and $\Cext$.)
\item Each fragment $\sigma \in \Sigma'$ is not self-intersecting. However, the two
      endpoints of a fragment may be the same point.
\item No two fragments in $\Sigma'$ cross.
\end{enumerate}

\begin{figure}[hbt]
\centering
\begin{tabular}{c@{\hspace{0.1\linewidth}}c}
\includegraphics[scale=0.3]{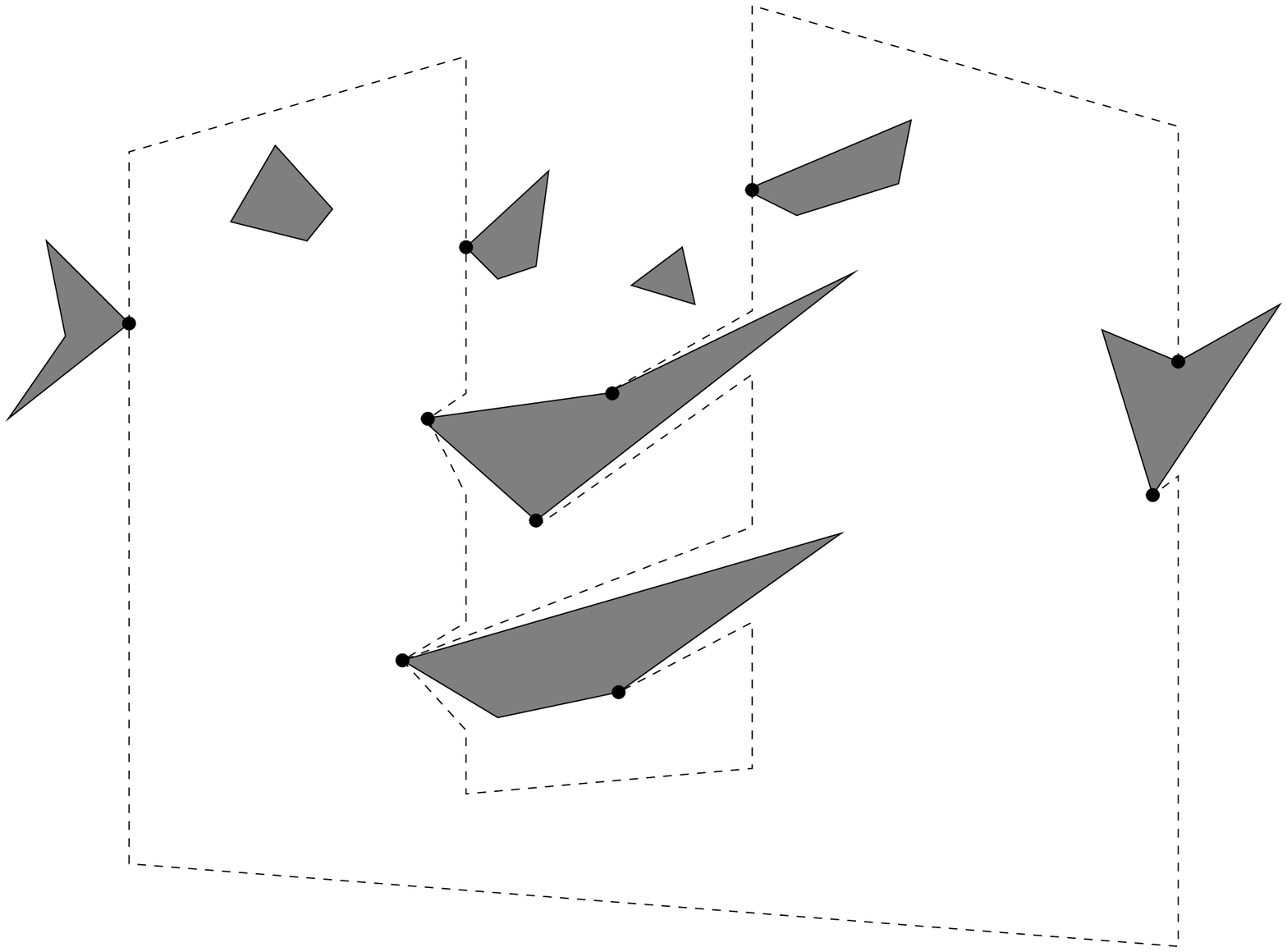} &
\includegraphics[scale=0.3]{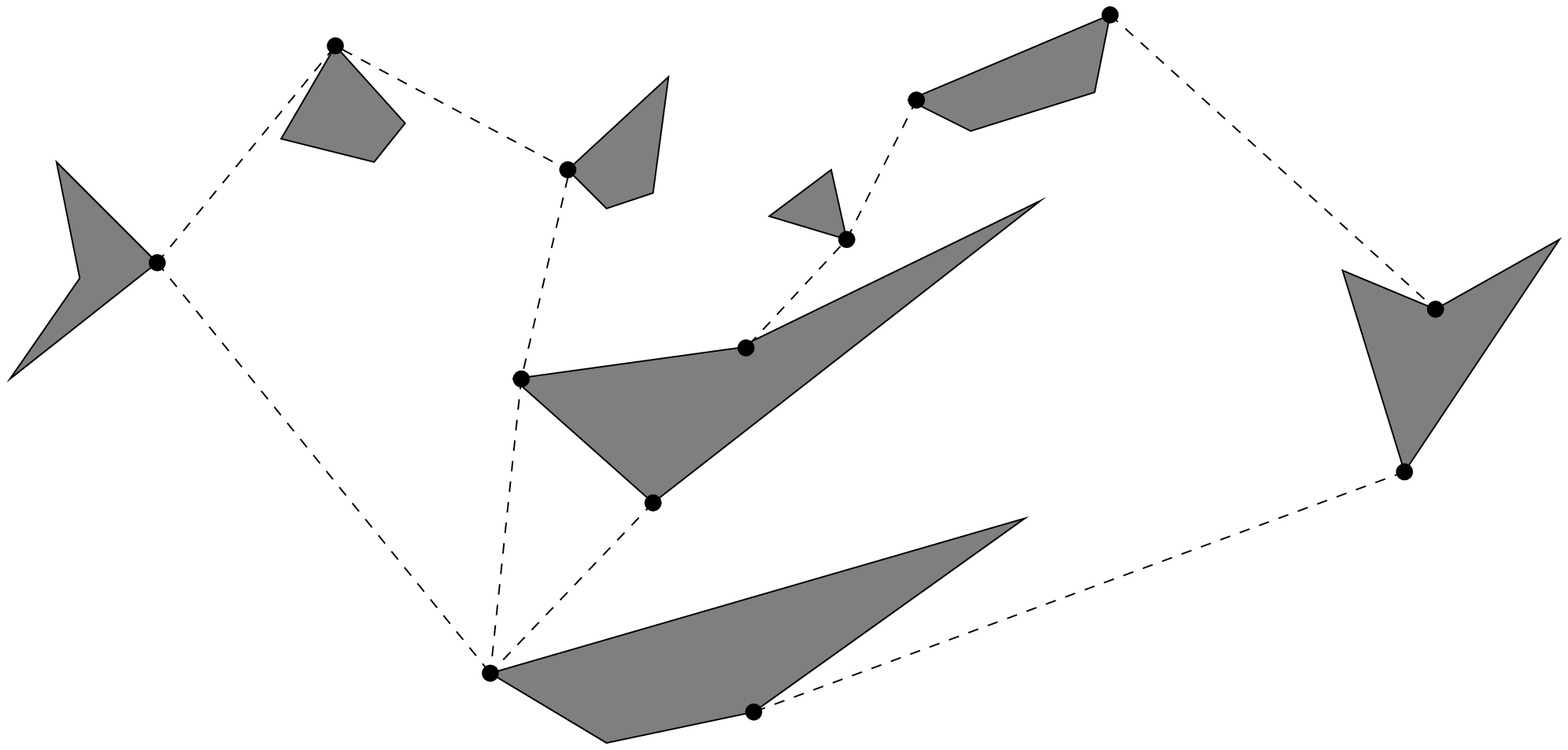}
\end{tabular}
\caption{(a) Modifying each fragment so that endpoints are polygon vertices. (b) The diagonals resulting from replacing each fragment by a shortest homotopic 
path.}
\label{fig:step2-2}
\end{figure}
   
For each $\sigma \in \Sigma'$, we compute the shortest path $\overline{\sigma}$ in $\polygon$ between the 
endpoints of $\sigma$ that is {\em homotopic to} $\sigma$. This can be done
efficiently \cite{hersh,efrat,cabello,besp}. Let us denote by
$n(\barsigma,C)$ the number of connected components of $\barsigma \cap (\text{interior } \ C)$, for any $C \in \C$.
Then we have $n(\barsigma,C) \leq n(\sigma,C)$ for any polygon $C \in \C$. In particular, if the interior
of $C$ is not intersected by $\sigma$ then it is not intersected by $\overline{\sigma}$ as well. Let
$n(\barsigma) = \sum_{C \in \C} n(\barsigma,C)$.

Let $\barSigma = \{ \barsigma \ | \ \sigma \in \Sigma'$\}. The fragments in $\barSigma$ satisfy
the following properties:

 \begin{enumerate}
\item $\sum_{\barsigma \in \barSigma}n(\barsigma) \leq \sum_{e \in \Sigma} n(e) \leq \delta K/30$.
\item Each fragment $\barsigma \in \barSigma$ begins and ends at a vertex of $\polygon$ and contains
      no vertex of $\polygon$ in its relative interior.
\item No fragment in $\barSigma$ intersects the interior of any convex polygon in $\Cint \cup \Cext$.
\item The removal of the points corresponding to the fragments in $\barSigma$ partitions 
      $\polygon$ into connected components with the property that no component contains a    
      polygon from $\Cint$ as well as a polygon from $\Cext$.
\item Each fragment $\barsigma \in \barSigma$ is not self-intersecting. However, the two
      endpoints of a fragment may be the same point.
\item No two fragments in $\barSigma$ cross.
\end{enumerate}  

Note that each fragment $\barsigma$, being a shortest homotopic path, is constituted of a sequence of diagonals and edges from $\polygon$. So
we define $D(\Sigma)$ as the set of diagonals corresponding to the fragments in $\barSigma$. (A diagonal
in $D(\Sigma)$ can be present in more than one fragment of $\barSigma$.)
See Figure~\ref{fig:step2-2}.

The last two fragment properties of $\barSigma$ imply that $D(\Sigma)$ is a conforming set of diagonals. Notice that
the number of diagonals in $D(\Sigma)$ can be much greater than the number of edges in $\Sigma$. 
However, $D(\Sigma)$ is uniquely and efficiently computed given $\Sigma$. Since $\Sigma$ comes
from a family of $n^{O(1/\delta^2)}$ cycles that can be computed in $n^{O(1/\delta^2)}$ time given
$\polygon$, $D(\Sigma)$ comes from a family of  $n^{O(1/\delta^2)}$ diagonal subsets that can be
computed in $n^{O(1/\delta^2)}$ time given $\polygon$.

\paragraph{Step 3:} For a diagonal $d \in D(\Sigma)$, let $n(d,C) = 1$ if $d$ intersects the interior
of $C \in \C$, and $0$ otherwise. Let $n(d) = \sum_{C \in \C}n(d,C)$. The first property of
$\barSigma$ can be restated as saying that $\sum_{d \in D(\Sigma)} n(d) \leq \delta K / 30$. 

The diagonals in $D(\Sigma)$ partition $\polygon$ into a set of smaller polygons $\{P_1, P_2,
\ldots, P_s\}$. We now show
how to obtain, from $\C$, convex decompositions of these smaller polygons that obey the size
bounds claimed in the lemma. We will think of these new convex decompositions as a 
new convex decomposition of $\polygon$ that respects the set $D(\Sigma)$ of diagonals. The
new convex decomposition will have the convex polygons in $\Cint$ and $\Cext$ -- the interiors
of these polygons do not intersect the diagonals in $D(\Sigma)$. Let $\Cbad = \C \setminus 
\{ \Cint \cup \Cext \}$. From the properties of $\Sigma$, it follows that $|\Cbad| \leq
\delta K/30$. We show below that we can obtain a convex decomposition of size at most $\delta K$ for
the portion of $\polygon$ that is covered by the polygons in $\Cbad$. This convex decomposition will
respect the set $D(\Sigma)$. Since smaller polygon $P_j$ does not have a polygon from both $\Cint$ and
$\Cext$,  it follows that 
\[K(P_j) \leq \max \{ |\Cint|, |\Cext|\} + \delta K \leq (2/3 + \delta) K.\] It also follows that
\[ \sum_j K(P_j) \leq |\Cint| + |\Cext| + \delta K \leq (1 + \delta) K.\]

We describe the construction of $\Cnew$, the new convex decomposition of the portion of $\polygon$ that 
is covered by the polygons in $\Cbad$. This $\Cnew$ respects the diagonals in $D(\Sigma)$, that is,
the interior of no convex polygon in $\Cnew$ is intersected by a diagonal in $D(\Sigma)$. For each
convex polygon $C \in \Cbad$, consider the subset $D(C) \subseteq D(\Sigma)$ of diagonals that
intersect the interior of $C$. Let $V(C)$ denote those vertices of $C$ that do not lie on any
diagonal in $D(\Sigma)$. Define the following relation on $V(C)$: $u$ and $v$ are related if the line
segment joining them does not intersect any diagonal in $D(C)$. It is easy to see that this is an
equivalence relation. Let $V_1, V_2, \ldots, V_m$ be the equivalence classes. It is easy to see that
$m \leq D(C) + 1 $. We add $\conv(V_i)$ to $\Cnew$  if $\conv(V_i)$ is a 2-dimensional object, that is,
not a line segment or a point. See Figure ~\ref{fig:step31}. The number of convex polygons contributed by $C$ to $\Cnew$ is 
at most $1 + D(C)$, so the number of convex polygons in $\Cnew$ overall is at most
\[ \sum_{C \in \Cbad} (1 + D(C)) \leq |\Cbad| + \sum_{d \in D(\Sigma)} n(d) \leq \delta K/15.\]

\begin{figure}[hbt]
\centering
\begin{tabular}{c@{\hspace{0.1\linewidth}}c}
\includegraphics[scale=0.5]{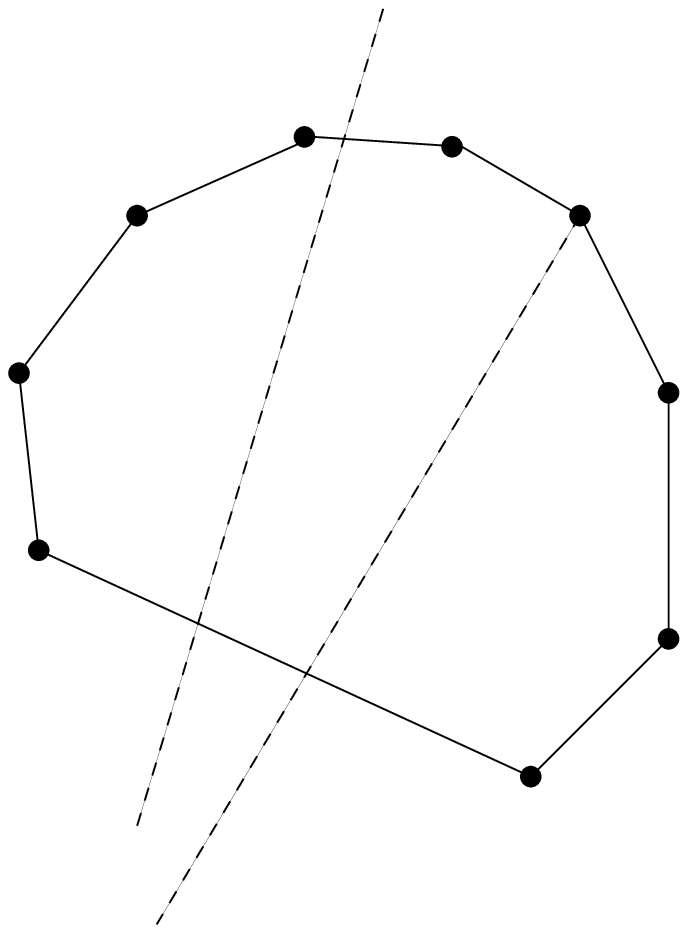} &
\includegraphics[scale=0.5]{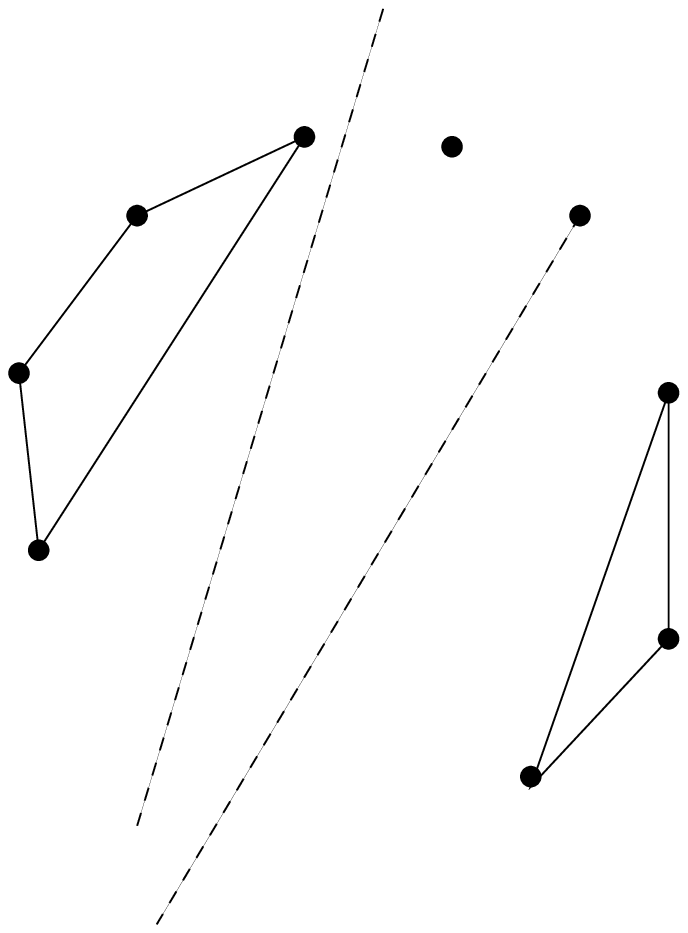}
\end{tabular}
\caption{(a) A convex polygon $C \in \Cbad$, and the diagonals in $D(\Sigma)$ that
intersect it. (b) The convex polygons added to $\Cnew$ from $C$.}
\label{fig:step31}
\end{figure}

For each polygon $\polygon_j$ in the partition of $\polygon$ induced by $D(\Sigma)$, consider the portion that is outside the polygons of $\Cint$, $\Cext$, and $\Cnew$. This portion is a set of polygons. We triangulate each such polygon,
and add the resulting triangles to $\Cnew$. See Figure \ref{fig:step32}. Note
that triangulating a polygon with $m$ vertices results in at most $3m$ triangles, even if the polygon has holes. This
completes the construction of $\Cnew$.

We need to bound the number of triangles added in this step, summed over
all $\polygon_j$. To this end,
let $\lambda$ denote the sum of the number of vertices of all the polygons
we triangulate. To bound $\lambda$, we observe that each $C \in \Cbad$
``contributes'' at most $8 D(C)$ vertex-polygon features to $\lambda$.  
Thus, $\lambda \leq 8 \sum_{C \in \Cbad} D(C) \leq 8 \delta K/30$.
So the number of triangles we add to $\Cnew$ is at most $3 \lambda \leq
24 \delta K/30$. 

Thus, $|\Cnew| \leq \frac{24 \delta K}{30} + \frac{\delta K}{15} \leq \delta K$. We now have the desired convex decomposition of
$\polygon$ that respects $D(\Sigma)$: $\Cint \cup \Cext \cup \Cnew$. This completes the proof of the lemma.

\begin{figure}[H]
\centering
\begin{tabular}{c@{\hspace{0.1\linewidth}}c}
\includegraphics[scale=0.3]{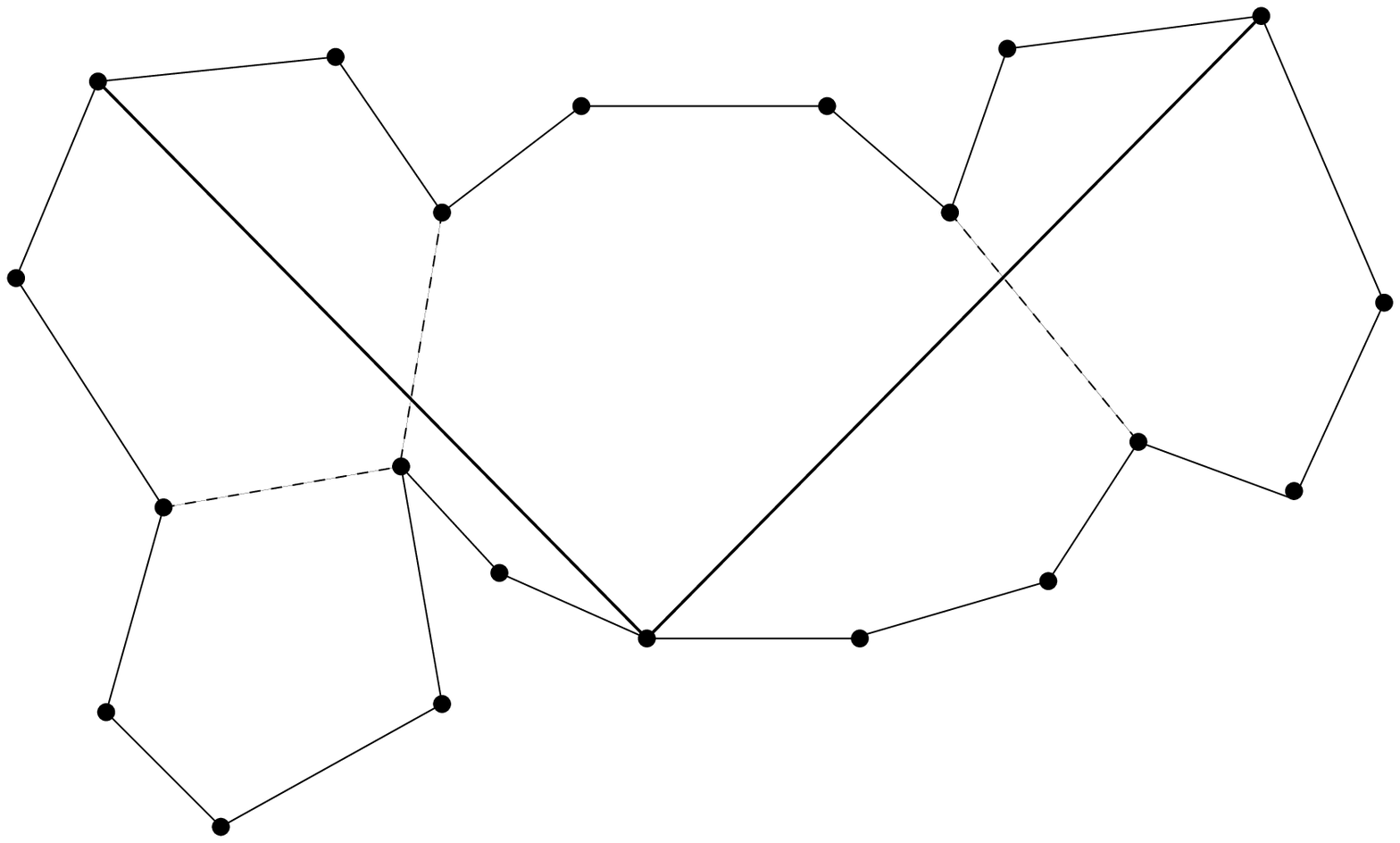} &
\includegraphics[scale=0.3]{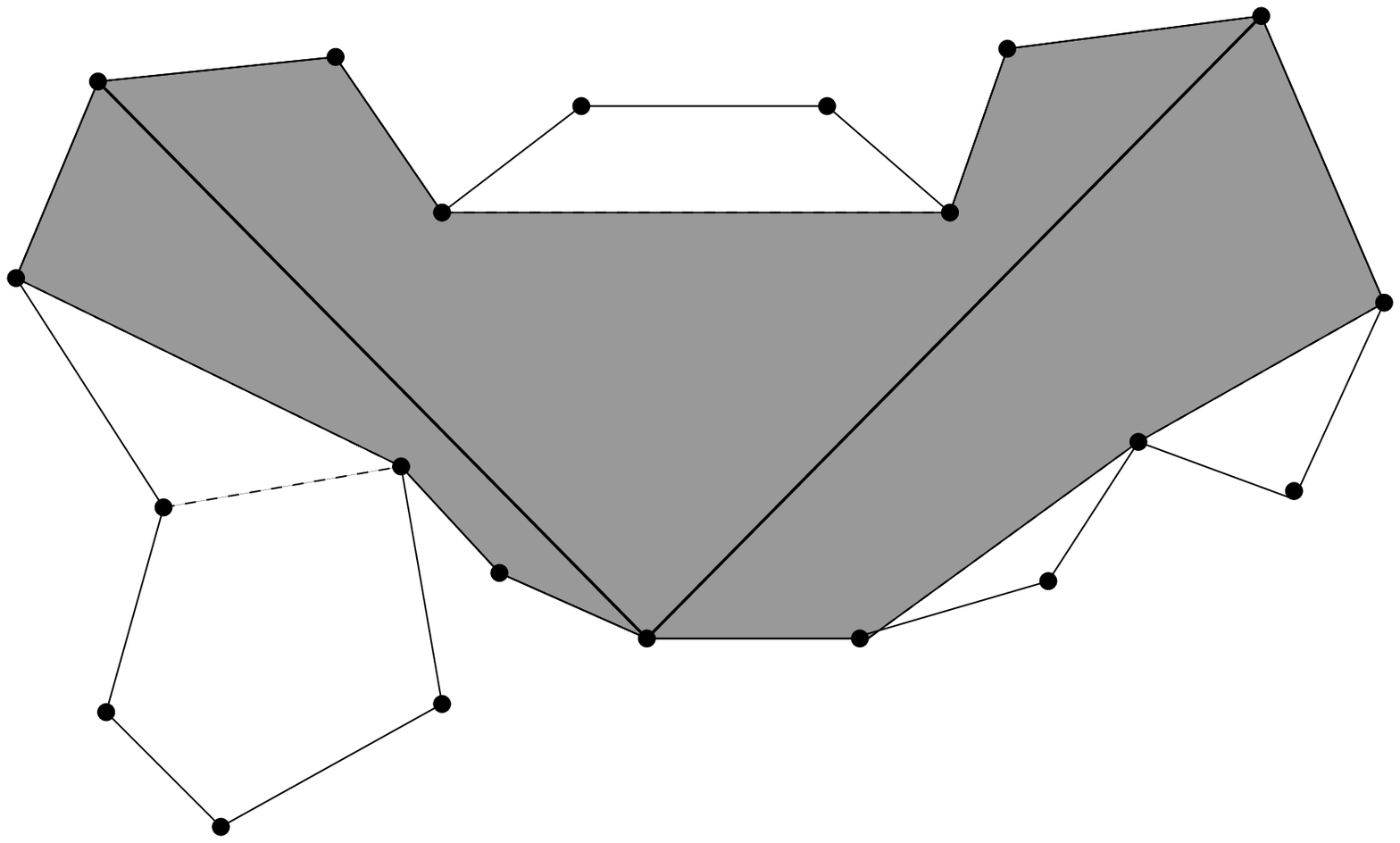}
\end{tabular}
\caption{(a) An illustration of a polygon $P$, an optimal decomposition $\C$
using (dashed) diagonals, and the two diagonals in $D(\Sigma)$ (bold). (b) 
The three polygons in $\Cbad$ contribute a total of four convex polygons to
$\Cnew$; in addition, we triangulate the shaded polygons and add the triangles
to $\Cnew$.}
\label{fig:step32}
\end{figure}

\subsection{Algorithmic Aspects}

We now use the Lemma~\ref{lem:sep1} to develop a QPTAS for the convex
decomposition problem. For this purpose, we need a good exact algorithm
to serve as the base case for our recursive algorithm. Suppose $P'$ is
an $n$-vertex polygon, and the optimal decomposition for it has 
$K = K(P')$ convex polygons $P'_1, \ldots, P'_k$. We argue that the number of 
diagonals added is at most $3K - 6$. To see this, construct a graph where there
is a vertex for each $P'_i$, and an edge for each diagonal between the (vertices
 corresponding to the ) two convex polygons it is incident to. This graph is
clearly planar. Furthermore, since the $P'_i$ are convex, the graph has no parallel edges. As such, the number of edges, and hence diagonals, is at most
$3K - 6$.

Thus, given an $n$-vertex polygon $P'$ and a $k \geq 0$, we can check if
$P'$ admits a convex decomposition of size at most $k$ in $n^{O(k)}$ time. We only need to try all conforming subsets of at most $3k - 6$ diagonals. In the
same time bound, we can find an optimal convex decomposition, assuming it
has size at most $k$.

We now describe our QPTAS. It will be convenient to describe a 
non-deterministic algorithm first. Assuming it makes the right separator 
choices, we can analyze the approximation guarantee. Subsequently, we
make the algorithm deterministic and bound its running time.

\paragraph{Nondeterministic Algorithm.}
Our algorithm $\decompose(P')$ takes as input a polygon $P'$ and returns 
a decomposition of $P'$. It uses a parameter $0 < \delta < \frac{3}{4} - 
\frac{2}{3}$ that we specify later. Let $\lambda = \frac{ c \log (1/\delta)}{\delta^3}$,
the threshold in Lemma \ref{lem:sep1}. Since $P'$ will be a subpolygon of $P$,
the number of its vertices is at most $n$. Our overall algorithm simply
invokes $\decompose(P)$.

\begin{enumerate}
\item We check if $P'$ has a decomposition with at most $\lambda$ convex polygons. If so, we return the optimal decomposition. This is the base case of our
algorithm. This computation can be done as described
above in $n^{O(\lambda)}$ time. Henceforth, we assume that $K(P') > \lambda$.

\item Compute the family  $\D = \{ D_1, D_2, \ldots, D_t \}$ of sets of diagonals, as stated in Lemma \ref{lem:sep1}, for $P'$.

\item Choose a $D_i \in \D$.

\item Suppose $D_i$ partitions $P'$ into subpolygons $P'_1,P'_2,\ldots,P'_s$.
      Return 
      \[ \bigcup_{j=1}^s \decompose(P'_j).\]
\end{enumerate}

\paragraph{Approximation Ratio.}
We define the level of a polygon $P'$ to be the integer $i > 0$ such that
$\lambda (4/3)^{i-1} < K(P') \leq \lambda (4/3)^i$. If $K(P') \leq \lambda$,
we define its level to be $0$. Thus if $\decompose(P')$ is solved via the
base case, then the level of $P'$ is $0$. The following lemma bounds the 
quality of approximation of our non-deterministic algorithm.

\begin{lemma}
\label{lem:approx1}
Assume that $\delta < \frac{3}{4} - \frac{2}{3}$. There is an instantiation
of the non-deterministic choices for which $\decompose(P')$ returns a
convex decomposition with at most $(1 + \delta)^{\ell} K(P')$ polygons, where $\ell$
is the level of $P'$.
\end{lemma}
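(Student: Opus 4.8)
The plan is to prove the statement by induction on the level $\ell$ of $P'$. The base case is $\ell = 0$: then $K(P') \leq \lambda$, so $\decompose(P')$ returns in Step~1 an optimal decomposition of $P'$, which has exactly $K(P')$ convex polygons; since $(1+\delta)^0 K(P') = K(P')$ and there are no non-deterministic choices to make in this branch, the claim holds trivially.

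For the inductive step I would assume $\ell \geq 1$, so that $K(P') > \lambda = \frac{c\log(1/\delta)}{\delta^3}$ and the hypothesis of Lemma~\ref{lem:sep1} is satisfied for $P'$ with parameter $\delta$. Hence the family $\D$ computed in Step~2 contains a set of diagonals $D_i$ that partitions $P'$ into subpolygons $P'_1,\dots,P'_s$ with $K(P'_j) \leq (2/3+\delta)K(P')$ for every $j$ and $\sum_{j=1}^s K(P'_j) \leq (1+\delta)K(P')$. I let the non-deterministic choice in Step~3 be precisely this $D_i$. The crucial point is that every $P'_j$ has level at most $\ell-1$: since $P'$ has level $\ell$ we have $K(P') \leq \lambda(4/3)^\ell$, hence $K(P'_j) \leq (2/3+\delta)\lambda(4/3)^\ell$, and because $\delta < \tfrac34 - \tfrac23 = \tfrac1{12}$ we get $(2/3+\delta)(4/3) < 1$, so $K(P'_j) < \lambda(4/3)^{\ell-1}$; by the definition of level this forces the level $\ell_j$ of $P'_j$ to satisfy $\ell_j \leq \ell - 1$ (with $\ell_j = 0$ when $K(P'_j) \leq \lambda$). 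In particular each recursive call operates at a strictly smaller level, so the recursion terminates and the induction is well-founded.

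Finally, applying the induction hypothesis to each $P'_j$, there is an instantiation of the non-deterministic choices inside $\decompose(P'_j)$ for which it returns a convex decomposition of $P'_j$ with at most $(1+\delta)^{\ell_j}K(P'_j) \leq (1+\delta)^{\ell-1}K(P'_j)$ polygons. Fixing these instantiations for all $j$ together with the choice of $D_i$ gives an instantiation for $\decompose(P')$; since $D_i$ is conforming, $\bigcup_{j=1}^s \decompose(P'_j)$ is a valid convex decomposition of $P'$, and its size is at most
$$\sum_{j=1}^s (1+\delta)^{\ell-1}K(P'_j) = (1+\delta)^{\ell-1}\sum_{j=1}^s K(P'_j) \leq (1+\delta)^{\ell-1}(1+\delta)K(P') = (1+\delta)^{\ell}K(P'),$$
which completes the induction. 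The only genuinely delicate step is the level-decrease argument, and this is exactly where the hypothesis $\delta < \tfrac34 - \tfrac23$ is needed; everything else is bookkeeping — the validity of gluing the decompositions of the $P'_j$ into one of $P'$ is immediate from $D_i$ being a conforming diagonal set, and properties (a) and (b) of Lemma~\ref{lem:sep1} feed directly into the two displayed inequalities.
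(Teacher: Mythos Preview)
Your proof is correct and follows essentially the same approach as the paper's own proof: induction on the level, invoking Lemma~\ref{lem:sep1} for the non-deterministic choice, using property~(a) together with $\delta < \tfrac{3}{4}-\tfrac{2}{3}$ to drop the level by at least one, and then summing via property~(b). Your level-decrease computation is spelled out more explicitly than the paper's (which just writes $K(P'_j)\le(2/3+\delta)K(P')\le(3/4)K(P')$ and asserts the conclusion), but the argument is the same.
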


\begin{proof}
The proof is by induction on $\ell$. The base case is when $\ell = 0$, and here the statement follows from the base case of the algorithm. So assume that $\ell > 1$, and
that the statement holds for instances with level at most $\ell - 1$.

Suppose that the algorithm non-deterministcally picks the $D_i \in \D$ that
satisfies the guarantees of Lemma \ref{lem:sep1} for $P'$. Let $P'_1, P'_2, \ldots, P'_s$ be the subpolygons that result from partitioning $P'$ with $D_i$. 

Since $K(P'_j) \leq (2/3 + \delta) K(P') \leq (3/4) K(P')$, it follows that the
level of each $P'_j$ is at most $\ell - 1$. Thus, for each $j$, there are
nondeterministic choices for which $\decompose(P'_j)$ returns a decomposition
of $P'_j$ with at most $(1 + \delta)^{\ell -1} K(P'_j)$ polygons. Thus, the
size of the decomposition of $P'$ returned by $\decompose(P')$ is at most
\[(1 + \delta)^{\ell -1} \sum_j  K(P'_j) \leq (1 + \delta)^{\ell} K(P').\]
\end{proof}

\paragraph{Deterministic Algorithm.} Since a triangulation of $P$, the original
input polygon, uses at most $3n - 6$ triangles, the level of $P$ is at most
$\alpha = \log_{4/3}(3n - 6)$. It
follows that with $\decompose(P)$, for suitable non-deterministic 
separator choices, returns a decomposition with at most $(1 + \delta)^
\alpha$ times the size of the optimal disjoint cover. Furthermore, the depth of the recursion with
such seperator choices is at most $\alpha$.

To get a deterministic algorithm, we make the following natural changes
to $\decompose(P')$.
If a call to $\decompose(P')$ is at recursion depth that is greater
than $\alpha$ (with respect to the root corresponding to
$\decompose(P)$), we return a special symbol $I$.  
In the $\decompose(P')$ routine, when we are not in the base case, 
we try all possible separators $D_i \in \D$ instead of nondeterministically 
guessing one -- we return the smallest sized set 
$\bigcup_{j=1}^s \decompose(P'_j)$, over all $i$ for which none of the recursive calls $\decompose(P'_j)$ returns $I$. If no such $i$ exists, 
$\decompose(P')$ returns $I$.

With these changes, $\decompose(P)$ is now a deterministic algorithm
that returns a decomposition of size at most $(1 + \delta)^{\alpha} K(P)$. 
Its running time is 
\[ \left( n^{O(1/\delta^2)} \right)^{\alpha} \cdot
   n^{O(\lambda)} = n^{O\left( (\log n + \log 1/\delta)/\delta^3 \right)} .\]

Plugging $\delta = \eps/2 \alpha$, the approximation
guarantee is $(1 + \eps)$ and the running time is
$n^{O((\log n/\eps)^4)}$. We can thus conclude with our main result for
convex decomposition:

\begin{theorem}
There is an algorithm that, given a polygon $P$ and an $\eps > 0$, runs in time
$n^{O((\log n/\eps)^4)}$ and returns a diagonal-based convex decomposition of $P$
with at most $(1 + \eps) K(P)$ polygons, where $K(P)$ is the number of polygons
in an optimal diagonal-based convex decomposition of $P$. Here $n$ stands for
the number of vertices in $P$.
\end{theorem}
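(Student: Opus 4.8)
The plan is to bootstrap Lemma~\ref{lem:sep1} into a recursive approximation scheme, as sketched in the excerpt. First I would record the base-case exact algorithm: since an optimal decomposition of an $n$-vertex polygon into $K$ convex pieces uses at most $3K-6$ diagonals (the dual graph, with a vertex per piece and an edge per diagonal, is planar and, by convexity of the pieces, simple), one can decide whether $K(P')\le k$ and, if so, compute an optimal decomposition by exhaustively trying all conforming subsets of at most $3k-6$ diagonals, in $n^{O(k)}$ time. With the threshold $\lambda=c\log(1/\delta)/\delta^3$ from Lemma~\ref{lem:sep1}, this handles every subpolygon with $K(P')\le\lambda$ in $n^{O(\lambda)}$ time, which is the base case of $\decompose$.

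Next I would set up the nondeterministic recursion $\decompose(P')$: if $K(P')\le\lambda$ return the exact optimum, otherwise compute the family $\D$ of Lemma~\ref{lem:sep1} for $P'$, guess the ``good'' $D_i$, and recurse on the induced subpolygons $P'_1,\dots,P'_s$, returning the union of the results. Defining the level of $P'$ by $\lambda(4/3)^{\ell-1}<K(P')\le\lambda(4/3)^\ell$ (level $0$ if $K(P')\le\lambda$), I would prove by induction on the level (this is Lemma~\ref{lem:approx1}) that with the right guesses $\decompose(P')$ returns a decomposition of size at most $(1+\delta)^\ell K(P')$: part~(a) of Lemma~\ref{lem:sep1} gives $K(P'_j)\le(2/3+\delta)K(P')\le(3/4)K(P')$ when $\delta<3/4-2/3$, so the level of each child strictly decreases, and part~(b), $\sum_j K(P'_j)\le(1+\delta)K(P')$, accumulates only a factor $(1+\delta)$ per level. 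Since a triangulation of the input polygon $P$ has at most $3n-6$ triangles, $K(P)\le 3n-6$, so the level of $P$ is at most $\alpha=\log_{4/3}(3n-6)$; hence, along the good branch, $\decompose(P)$ returns a decomposition of size at most $(1+\delta)^\alpha K(P)$ and, crucially, the recursion depth along that branch is at most $\alpha$ because the level strictly drops at every step.

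To derandomize, I would truncate the recursion at depth $\alpha$ (returning a failure symbol $I$ below that level) and, at every internal node, try \emph{all} $D_i\in\D$, keeping the smallest union over those choices for which no recursive call returns $I$; existence of the good branch guarantees that at least one choice succeeds at every node reached along it, so $\decompose(P)$ returns a valid decomposition of size at most $(1+\delta)^\alpha K(P)$. Finally I would pick $\delta=\eps/(2\alpha)$, so that $(1+\delta)^\alpha\le e^{\delta\alpha}=e^{\eps/2}\le 1+\eps$ (and this also meets the requirement $\delta<3/4-2/3$; for the few small $n$ where it would not, the problem is solved exactly by the base case), yielding the claimed $(1+\eps)$ approximation. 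For the running time, each internal node does $n^{O(1/\delta^2)}$ work, the recursion tree has depth $\le\alpha$ and branching $n^{O(1/\delta^2)}$, and each leaf costs $n^{O(\lambda)}=n^{O(\log(1/\delta)/\delta^3)}$, so the total is $\bigl(n^{O(1/\delta^2)}\bigr)^{\alpha}\cdot n^{O(\lambda)}=n^{O((\log n+\log(1/\delta))/\delta^3)}$, which on substituting $\delta=\Theta(\eps/\log n)$ simplifies to $n^{O((\log n/\eps)^4)}$.

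The only real subtlety, now that Lemma~\ref{lem:sep1} is available, is soundness of the derandomization: one must verify that truncating the recursion at depth $\alpha$ never kills the good branch (it does not, since along that branch the level, and hence the remaining depth budget, strictly decreases at each level) and that ``try every separator, take the best feasible union'' preserves the inductive size bound. The rest is the running-time bookkeeping and the one-line estimate $(1+\delta)^\alpha\le 1+\eps$, so I do not anticipate a substantive obstacle beyond getting these constants and the truncation argument right.
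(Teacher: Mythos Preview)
Your proposal is correct and follows essentially the same approach as the paper: the same $3K-6$ diagonal bound for the base case, the same level-based induction (Lemma~\ref{lem:approx1}), the same derandomization by truncating at depth $\alpha=\log_{4/3}(3n-6)$ and trying all separators, and the same choice $\delta=\eps/(2\alpha)$ with the same running-time arithmetic. If anything, you spell out a couple of details (the $(1+\delta)^\alpha\le e^{\eps/2}\le 1+\eps$ estimate and the small-$n$ caveat for the constraint $\delta<3/4-2/3$) more explicitly than the paper does.
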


\section{Surface Approximation}
We now describe our algorithm for the surface approximation problem. Recall that we are
given a set $\Sb$ of $n$ points in $\Real^3$ sampled from a bi-variate
function $f(x, y)$, and another parameter $\mu > 0$. A piece-wise linear function $g(x,y)$ is an approximation of $f(x,y)$ if $\forall \pb = (x,y,z) \in \Sb, |g(x,y) - z| \leq \mu$. The bi-variate function $f(x,y)$ represents the 
surface from which the points are sampled, and we want to compute an \textit{approximate} polyhedral surface $g(x,y)$ with minimal complexity. The complexity
of a piecewise linear surface is defined to be the number of its faces, which are required to be triangles. 

For any point $\pb \in \Sb$ which is in
$\Real^3$, we define $\p$ to be the projection of $\pb$ on to the
$xy$-plane. Let $S = \{p\ |\ \bar{p} \in \Sb\}$. A triangle $\triangle$ in
the $xy$-plane is a \textit{valid} triangle if it is the projection of a
triangle $\overline{\triangle}$ in $\Real^3$, such that $\forall \p \in S
\cap \triangle$ the vertical distance between $\overline{\triangle}$ and $\pb$
is at most $\mu$. Agarwal and Suri \cite{surface_approx_suri} have shown that the surface
approximation problem is equivalent, up to multiplicative constant factors,
to computing a minimum-cardinality cover for $S$ using a set of valid triangles with 
pairwise-disjoint interiors. Notice that the set of valid triangles can be
infinite. We describe a method for computing a polynomial-sized set $\mathcal{B}$ of 
valid triangles, termed the \textit{basis}, such that  the surface
approximation problem is equivalent, up to multiplicative constant factors,
to computing a minimum-cardinality cover for $S$ using a subset of {\em basis triangles} with 
pairwise-disjoint interiors. As we then show, the basis triangles have a certain
closure property that enables us to obtain an approximation scheme
for the above covering problem using the seperator approach. 

\subsection{Construction of the basis}
Let $\mathcal{T}$ be the set of all valid triangles in the plane, which can be
infinite. Let $\F = \{ S \cap \triangle \ | \ \triangle \mbox{ is a triangle}\}$.
It is easy to see that set $\F$ has size $O(n^6)$, and can be computed in, say,
$O(n^7)$ time.

\begin{figure}[hbt]
\tikzstyle{node} = [circle, fill=blue, minimum size=4pt, inner sep=0pt]
\centering
  \begin{subfigure}[b]{0.3\textwidth}
  \centering
  \resizebox{\linewidth}{!} {   
   \begin{tikzpicture}
       \coordinate (a) at (1.75,1.77);
       \coordinate (p) at (3.6,0.44);
       \coordinate (c) at (5.46,1.07);
       \coordinate (q) at (6.06,2.68);
       \coordinate (e) at (3.38,4.02);
       \coordinate (f) at (3.98,2.8);
       \coordinate (g) at (2.78,2.48);
       \coordinate (h) at (5.3,4.14);
       \coordinate (r) at (2.14,3.32) ;
       \coordinate (j) at (2.22,0.61);
       \coordinate (k) at (4,8);
       \coordinate (l) at (0,0);
       \coordinate (m) at (8,0);
       \coordinate (n) at (3.52,1.71);
       \coordinate (o) at (4.91,4.89);
       \coordinate (t) at (1.48,0.7);
       
       \draw (k) -- (l) -- (m) -- (k);
       \foreach \name in {a,c,e,f,g,h,j,n}
       {
          \node[node, label =$ $] at (\name) {$ $};
       }
       \foreach \name in {p,q,r}
       {
          \node[node, label =$\name$] at (\name) {$ $};
       }
      \draw[dashed] (e) -- (r) -- (a) -- (j) -- (p) -- (c) -- (q) -- (h) -- (e);
   \end{tikzpicture}
   }
   \end{subfigure}
   \hfill
   \begin{subfigure}[b]{0.3\textwidth}
    \centering
  \resizebox{\linewidth}{!} {   
   \begin{tikzpicture}
   \tikzset{hexagon/.style={color = red, thick, densely dotted}}
       \foreach \name in {a,c,e,f,g,h,j,n}
       {
          \node[node, label =$ $] at (\name) {$ $};
       }
       \foreach \name in {p,q,r}
       {
          \node[node, label =$\name$] at (\name) {$ $};
       }
       \draw[hexagon] (p) -- ($(p)!3.5cm!(c)$);
       \draw[hexagon] (p) -- ($(p)!3.5cm!(j)$);
       \draw[hexagon] (q) -- ($(q)!2.75cm!(c)$);
       \draw[hexagon] (q) -- ($(q)!4cm!(h)$);
       \draw[hexagon] (r) -- ($(r)!4cm!(e)$);
       \draw[hexagon] (r) -- ($(r)!3.25cm!(a)$);
       
       \draw[opacity=1] (k) -- (l) -- (m) -- (k);
   \end{tikzpicture}
   }
   \end{subfigure}
   \hfill
   \begin{subfigure}[b]{0.3\textwidth}
    \centering
  \resizebox{\linewidth}{!} {   
   \begin{tikzpicture}
       \foreach \name in {a,c,e,f,g,h,j,n}
       {
          \node[node, label =$ $] at (\name) {$ $};
       }
       \foreach \name in {p,q,r}
       {
          \node[node, label =$\name$] at (\name) {$ $};
       }
       \draw (k) -- (l) -- (m) -- (k);

       \draw (p) -- (r);
       \draw (p) -- (o);
       \draw (p) -- (q);   
       
       \draw (p) -- (c) -- (q) -- (o) -- (r) -- (t) -- (p);
   \end{tikzpicture}
   }
   \end{subfigure}
   \vspace{0.5cm}
   \caption{ (a) An arbitrary triangle $\triangle$ and the set $R = \triangle \cap S \in \mathcal{F}$. The dashed polygon is $\conv(R)$. (b) Three points $p,q,r$ on $\text{Conv}(R)$ and the corresponding hexagon $H_{pqr}$ formed by the edges of $\text{Conv}(R)$ incident on $p,q,r$. (c) Triangulation of $H_{pqr}$ results in the addition of at most 4 triangles to $\mathcal{B}$, which cover the points in $R$.
}
\label{fig:basis}
\end{figure}
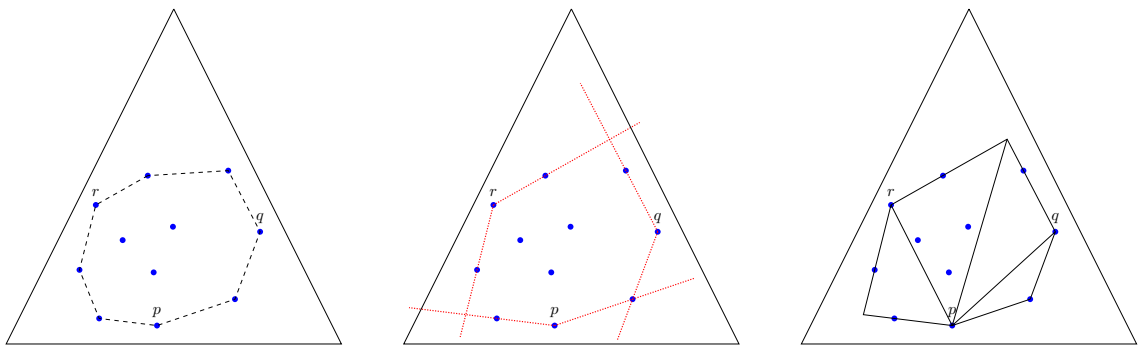

For each $R \in \F$, we compute the convex hull $\conv(R)$ of the points in $R$. 
If $\conv(R)$ consists of a point, or a single edge, we add the degenerate 
triangle $\conv(R)$ to the basis $\mathcal{B}$. Otherwise, $\conv(R)$ is
2-dimensional and has at least three vertices. For each triple $\{p,q,r\}$
of vertices in $\conv(R)$, we contruct the hexagon $H_{pqr}$ formed by the edges of $\conv(R)$ 
incident on $p$, $q$, and $r$. $H_{pqr}$ may be degenerate i.e.\ it may not be a hexagon, or it may
be unbounded. In case $H_{pqr}$ is bounded, we triangulate the hexagon by using diagonals
from the bottom vertex, and add the resulting set of at most $4$ triangles to 
$\mathcal{B}$. See Figure~\ref{fig:basis}

Since we generate at most $O(n^3)$ hexagons from $\conv(R)$, and from each such hexagon
we generate at most $4$ triangles, the basis $\mathcal{B}$ would now consist of at most
$O(n^9)$ triangles.

\paragraph{Filtering the basis:} We remove any triangle ${\triangle} \in
\mathcal{B}$ that is not a valid triangle. This can be done by solving a simple $3$-dimensional
linear program for each triangle in $\mathcal{B}$, as shown by Agarwal and Desikan \cite{surface_approx_desikan}. 
Let $S_{\triangle} = S \cap {\triangle}$ be the set of points contained inside ${\triangle}
\in \mathcal{B}$. Since ${\triangle}$ is a valid triangle,
then there would exist a triangle $\overline{\triangle}$ in $\Real^3$ such
that ${\triangle}$ is the projection of $\overline{\triangle}$ on the
$xy$-plane, and the vertical distance between $\overline{\triangle}$ and any
point in $\{\bar{p}\ |\ p \in S_{\triangle} \}$ is
at most $\mu$. This completes the description of the basis computation.

A useful property of the basis is summarized below.

\begin{lemma} Let $\triangle$ be any valid triangle. There exist a set $\mathcal{B}(\triangle)
\subseteq \mathcal{B}$ of at most
four triangles, such that (a) the triangles in $\mathcal{B}(\triangle)$ have
pair-wise disjoint interiors; (b) each of the triangles in $\mathcal{B}(\triangle)$ is contained
in $\triangle$; and (c) $\mathcal{B}(\triangle)$  covers $S \cap \triangle$.
\end{lemma}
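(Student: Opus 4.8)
The plan is to imitate, at the level of a single valid triangle, the construction used to build the basis $\mathcal{B}$. Let $\triangle$ be an arbitrary valid triangle and set $R = S \cap \triangle$. If $R$ is empty there is nothing to cover and we may take $\mathcal{B}(\triangle) = \emptyset$; if $R$ lies on a line or is a single point, then $\conv(R)$ is a degenerate triangle that was added to $\mathcal{B}$, it is contained in $\triangle$, it covers $R$, and we take $\mathcal{B}(\triangle) = \{\conv(R)\}$. So assume $\conv(R)$ is two-dimensional with at least three vertices. The first step is to observe that $R \in \mathcal{F}$ (it is $S$ cut by a triangle), so the construction in the basis description was in fact applied to this very $R$.

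Next I would pick the right triple of vertices of $\conv(R)$. Order the vertices of $\conv(R)$ along its boundary and choose three of them, $p$, $q$, $r$, that split the boundary into three chains each containing at most a third (say, at most $\lceil k/3\rceil$) of the remaining vertices, where $k$ is the number of vertices of $\conv(R)$; the hexagon $H_{pqr}$ formed by the two $\conv(R)$-edges incident to each of $p,q,r$ then has the property that the region $\conv(R)\setminus H_{pqr}$ consists of three ``caps,'' each a convex polygon strictly inside $\triangle$... but actually the cleaner route is: $H_{pqr}\subseteq \conv(R)\subseteq\triangle$, and by the triangle inequality / supporting-line argument every point of $R$ lies inside $H_{pqr}$, because each vertex of $\conv(R)$ other than $p,q,r$ is cut off by one of the three edges bounding $H_{pqr}$ on the far side, hence lies between that edge and the corresponding vertex — here I would draw the picture in Figure~\ref{fig:basis}(b) and check that $R\subseteq H_{pqr}$. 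Since $R\subseteq\conv(R)\subseteq\triangle$ and $\triangle$ is bounded, $H_{pqr}$ is bounded, so the basis construction triangulated $H_{pqr}$ from its bottom vertex and added the at most four resulting triangles to $\mathcal{B}$. Call this set $\mathcal{B}'(\triangle)$.

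Now I verify the three claimed properties. (a) The triangles of a fan triangulation of a single polygon have pairwise-disjoint interiors, so the triangles in $\mathcal{B}'(\triangle)$ do. (b) Each such triangle is contained in $H_{pqr}\subseteq\conv(R)\subseteq\triangle$. (c) Their union is $H_{pqr}$, which contains $R = S\cap\triangle$ by the previous step. Finally, to get $\mathcal{B}(\triangle)\subseteq\mathcal{B}$ rather than merely $\subseteq\mathcal{B}'(\triangle)\cap\mathcal{B}$, note that every triangle of $\mathcal{B}'(\triangle)$ is contained in $\triangle$ and $\triangle$ is valid, hence each of these triangles is itself valid (the lifted triangle $\overline{\triangle}$ over $\triangle$ restricts to a lift of the subtriangle meeting the $\mu$-condition on the fewer points it contains); therefore none of them is discarded in the filtering step, and $\mathcal{B}'(\triangle)\subseteq\mathcal{B}$. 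Taking $\mathcal{B}(\triangle) = \mathcal{B}'(\triangle)$ finishes the proof.

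The main obstacle is step two: proving that $R\subseteq H_{pqr}$, i.e.\ that the hexagon cut out by the six $\conv(R)$-edges incident to a suitably balanced triple $p,q,r$ really contains all of $\conv(R)$'s vertices. This needs a careful convexity argument — each omitted vertex must be shown to lie on the inner side of all three ``long'' edges of the hexagon — and a check that the degenerate cases ($H_{pqr}$ not a genuine hexagon, or some chain empty) are harmless; the balancedness of the triple is what guarantees $H_{pqr}$ is non-degenerate enough to be triangulated into at most four pieces. Everything after that is routine.
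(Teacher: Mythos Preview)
Your proposal has the geometry of $H_{pqr}$ inverted, and this flips which of (b) and (c) is the nontrivial step. In the paper's construction, $H_{pqr}$ is the intersection of the three wedges $W_p, W_q, W_r$, where $W_v$ has apex $v$ and is bounded by the two rays along the edges of $\conv(R)$ incident to $v$. Each such wedge \emph{contains} $\conv(R)$ (its bounding lines are supporting lines of $\conv(R)$ at $v$), so in fact $\conv(R) \subseteq H_{pqr}$, not $H_{pqr} \subseteq \conv(R)$ as you assert. Consequently, property (c) --- that $R \subseteq H_{pqr}$ --- is immediate, and your ``main obstacle'' is not an obstacle at all. What genuinely needs work is property (b): showing that the hexagon, which protrudes beyond $\conv(R)$, is nonetheless contained in $\triangle$.

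This is exactly where your ``balanced triple'' choice fails. Splitting the boundary of $\conv(R)$ into three equal arcs gives no control over where the wedge rays go relative to the sides of $\triangle$; for a generic balanced triple, $H_{pqr}$ can easily stick outside $\triangle$ (and may even be unbounded). The paper instead chooses, for each side $\ell_i$ of $\triangle$, the vertex $p_i$ of $\conv(R)$ closest to $\ell_i$. This choice forces both edges of $\conv(R)$ at $p_i$ to bend away from $\ell_i$, so the entire wedge $W_{p_i}$ lies in the half-plane $h_i$ bounded by $\ell_i$ that contains $\triangle$. Intersecting over $i=1,2,3$ gives $H_{p_1p_2p_3} = \bigcap_i W_{p_i} \subseteq \bigcap_i h_i = \triangle$, which is the missing ingredient. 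Your filtering argument (each subtriangle of a valid triangle is valid, so nothing is removed) is fine and is needed to finish.
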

\begin{proof} 
   Let $R = S \cap {\triangle}$. If $\conv(R)$ is $0$- or $1$-dimensional, we
   have added the degenerate triangle $\conv(R)$ itself to $\mathcal{B}$, and
   the lemma holds with $\mathcal{B}(\triangle) = \{\conv(R)\}$ . 
   Assume henceforth that $\conv(R)$ is $2$-dimensional.

Let $h_1, h_2 , h_3$ be
   the half-planes defined by the 3 edges of ${\triangle}$, such that $h_1
   \cap h_2 \cap h_3 = {\triangle}$. Let $p_i$ be the point in $R$
   that is closest to the line bounding $h_i$ -- if there is a tie, we break it arbitrarily.
   Consider the hexagon $H_{p_1p_2p_3}$ formed by extending the edges of $\conv(R)$
   incident to the $p_i$. Our procedure for generating the basis would have
   generated the hexagon  $H_{p_1p_2p_3}$ while considering $R$. It is not
   hard to see, as we explain below, that $H_{p_1p_2p_3} \subseteq \triangle$.
   The set of at most $4$ triangles that we obtain by triangulating $H_{p_1p_2p_3}$
   are added to $\mathcal{B}$. This set $\mathcal{B}(\triangle)$ of triangles has the properties claimed.

   We now show that $H_{p_1p_2p_3} \subseteq \triangle$. Let $W_i$ be the wedge
   whose apex is at $p_i$ and whose bounding rays are the ones containing the
   two edges of $\conv(R)$ incident at $p_i$. Since $p_i$ is the point in
   $\conv(R)$ that is closest to the line bounding $h_i$, it follows that
   the two rays bounding the edge $W_i$ do not contain any point outside
   $h_i$. That is, $W_i \subseteq h_i$. This implies that
\[ H_{p_1p_2p_3} = W_1 \cap W_2 \cap W_3 \subseteq h_1 \cap h_2 \cap h_3 = \triangle.\] 
\end{proof}

The next two observations relate the surface approximation problem to that
of computing a minimal cover of $S$ using a set of pairwise-disjoint triangles 
from the basis $\mathcal{B}$.  
A consequence of our basis if the following.

\begin{lemma}
There is a set of at most $4 OPT$ triangles from $\mathcal{B}$, with pairwise-disjoint
interiors, that covers $S$, where $OPT$ is the complexity of an optimal 
solution to our surface approximation instance.
\end{lemma}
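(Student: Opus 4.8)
The plan is to use the known equivalence (from Agarwal and Suri) between the surface approximation problem and the problem of covering $S$ by valid triangles with pairwise-disjoint interiors, together with the preceding lemma that ``rounds'' an arbitrary valid triangle to at most four basis triangles. Concretely, let $\triangle_1, \ldots, \triangle_{OPT}$ be an optimal collection of valid triangles with pairwise-disjoint interiors that covers $S$; here I am using the Agarwal--Suri reduction to assert that such a collection of size $O(OPT)$ exists, where $OPT$ denotes the complexity of the optimal surface approximation (the constant hidden here is absorbed into the statement's ``$4 OPT$'' — or, more carefully, I would phrase the target bound as $O(OPT)$ and track that the constant is the product of the Agarwal--Suri constant and the factor $4$ from the rounding lemma).

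The main step is then to apply the previous lemma to each $\triangle_k$ to obtain a set $\mathcal{B}(\triangle_k) \subseteq \mathcal{B}$ of at most four basis triangles, with pairwise-disjoint interiors, each contained in $\triangle_k$, and together covering $S \cap \triangle_k$. I would then take $\mathcal{B}' = \bigcup_{k} \mathcal{B}(\triangle_k)$. Since each $\mathcal{B}(\triangle_k)$ covers $S \cap \triangle_k$ and the $\triangle_k$ cover $S$, the union $\mathcal{B}'$ covers $S$. The cardinality bound is immediate: $|\mathcal{B}'| \leq \sum_k |\mathcal{B}(\triangle_k)| \leq 4 OPT$ (again, folding the Agarwal--Suri constant into $OPT$ if one insists on the original formulation).

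The one genuine obstacle is verifying pairwise-disjointness of interiors across different $k$: within a single $\mathcal{B}(\triangle_k)$ disjointness is given by the lemma, but two triangles coming from different $\mathcal{B}(\triangle_k)$ and $\mathcal{B}(\triangle_{k'})$ must also have disjoint interiors. This is where part (b) of the previous lemma does the work: every triangle in $\mathcal{B}(\triangle_k)$ lies inside $\triangle_k$, and every triangle in $\mathcal{B}(\triangle_{k'})$ lies inside $\triangle_{k'}$; since $\triangle_k$ and $\triangle_{k'}$ have disjoint interiors (they come from the optimal disjoint cover), any triangle contained in one has interior disjoint from any triangle contained in the other. So the containment property is exactly the hook that propagates disjointness from the optimal solution to the basis solution. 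I would write this verification out explicitly, since it is the only place the argument could slip. Everything else — the covering property and the counting — is a one-line union bound.
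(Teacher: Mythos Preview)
Your overall structure is the same as the paper's: start from a disjoint cover of $S$ by valid triangles, replace each by its set $\mathcal{B}(\triangle)$ of at most four basis triangles, and take the union; disjointness across different $\triangle_k$ follows from the containment property (b), exactly as you argue.

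The one unnecessary detour is your invocation of the Agarwal--Suri reduction to produce the initial disjoint cover. That reduction is only needed in the reverse direction (from a disjoint planar cover back to a surface). In the forward direction the paper simply projects the $OPT$ triangular faces of the optimal piecewise-linear surface $g$ onto the $xy$-plane: since $g$ is the graph of a function, these projections automatically have pairwise-disjoint interiors, and each is a valid triangle by definition. This gives \emph{exactly} $OPT$ disjoint valid triangles covering $S$, so the final bound is genuinely $4\,OPT$ with no hidden constant. Your version, by going through the Agarwal--Suri equivalence, only yields $O(OPT)$ starting triangles, which is why you found yourself hedging about absorbing constants; that hedging is avoidable.
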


\begin{proof}
Consider the set $\mathcal{T}' \subseteq \mathcal{T}$ of triangles that are formed by 
projecting the triangular faces in the optimal solution. The set 
$\bigcup_{\triangle \in \mathcal{T}'} \mathcal{B}(\triangle)$ has the properties
claimed.
\end{proof} 
  
The next observation is due to Agarwal and Suri \cite{surface_approx_suri}.

\begin{lemma}
If we have a cover of $S$ using $m$ pairwise-disjoint triangles from $\mathcal{T}$,
then we can efficiently compute a solution to the surface approximation problem with complexity
$O(m)$.
\end{lemma}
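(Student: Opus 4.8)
The plan is to follow the reduction of Agarwal and Suri. We are given $m$ pairwise interior-disjoint valid triangles $\triangle_1,\ldots,\triangle_m \in \mathcal{T}$ whose union covers $S$. For each $\triangle_i$, since it is valid, we first compute a supporting plane $\pi_i$ in $\Real^3$ --- by solving the same kind of three-dimensional linear program used in the basis-filtering step --- such that the triangle $\overline{\triangle_i}$ lying on $\pi_i$ directly above $\triangle_i$ has vertical distance at most $\mu$ from every point $\pb \in \Sb$ whose projection $\p$ lies in $\triangle_i$. These planes already give a valid height assignment over the region of the plane covered by the $\triangle_i$; what remains is to assemble the $\triangle_i$ into the triangular faces of a single piecewise-linear surface $g$ of complexity $O(m)$, filling in the uncovered part of the domain.

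First I would fix an axis-parallel box $B$ whose interior contains $S$, and form the planar subdivision $\mathcal{A}$ obtained by overlaying the boundaries of $\triangle_1,\ldots,\triangle_m$ and of $B$. The key structural observation is that, because the $\triangle_i$ have pairwise-disjoint interiors, no edge of any $\triangle_i$ can cross the interior of any $\triangle_j$; hence the vertices of $\mathcal{A}$ are just the $O(m)$ triangle and box corners together with the points where a corner of one triangle meets an edge of another, and $\mathcal{A}$ has only $O(m)$ vertices, edges, and faces (by Euler's formula). This is the step that must be done carefully, and it is the main point of difference from a general arrangement of triangles, which would have $\Theta(m^2)$ complexity. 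Each face of $\mathcal{A}$ is either a sub-piece of some $\triangle_i$ or a "gap" face not covered by any $\triangle_i$; a gap face is a simple polygon, possibly with holes (a $\triangle_j$ may be surrounded by empty space). I then triangulate every face of $\mathcal{A}$, which adds only $O(m)$ triangles in total, since triangulating a polygon with $v$ vertices and holes produces $O(v)$ triangles.

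Finally I assign heights to the resulting $O(m)$ planar triangles. A triangulation triangle $\tau$ contained in some $\triangle_i$ is lifted onto the plane $\pi_i$; since $\tau \subseteq \triangle_i$ and the plane is unchanged, every point of $\Sb$ projecting into $\tau$ is still within vertical distance $\mu$. A triangulation triangle $\tau$ lying in a gap face contains no point of $S$ in its interior (the $\triangle_i$ cover $S$), so it may be lifted to an arbitrary plane, say $z=0$; points of $S$ that happen to lie on the boundary of such a $\tau$ also lie in some $\triangle_i$ and are handled there. The union of the projections of these $O(m)$ lifted triangles covers $B \supseteq S$, their interiors are pairwise disjoint, and every sample point is within $\mu$ of the triangle over it, so they constitute a piecewise-linear $\mu$-approximation $g$ of complexity $O(m)$; outside $B$ there are no sample points, so this suffices (or one can add a constant number of unbounded triangles if the domain is literally required to be all of $\Real^2$). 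All steps --- solving $m$ linear programs, constructing $\mathcal{A}$, and triangulating its faces --- run in time polynomial in $m$ and $n$, giving the claimed efficient computation.
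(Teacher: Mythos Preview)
The paper does not actually prove this lemma; it simply attributes the observation to Agarwal and Suri \cite{surface_approx_suri} and moves on. Your proposal is a correct reconstruction of that argument: the key point, which you identify, is that because the $m$ valid triangles have pairwise-disjoint interiors their boundary edges are pairwise non-crossing, so the overlay $\mathcal{A}$ has only $O(m)$ vertices (the $3m$ triangle corners, the box corners, and the $O(m)$ box--edge intersections), hence $O(m)$ edges and faces by planarity; triangulating every face and lifting each sub-triangle to its validating plane then yields an $O(m)$-face piecewise-linear surface that is within $\mu$ of every sample point. One small caveat worth stating explicitly is that the resulting $g$ need not be continuous along shared edges (the planes $\pi_i$ and $\pi_j$ generally disagree there), but this is harmless because the approximation condition is imposed only at the finite sample set $\Sb$; if a continuous terrain is desired, one can insert $O(m)$ vertical wall triangles without affecting the asymptotic complexity.
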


The above two lemmas imply that if we have an $O(1)$-approximation to the problem 
of computing a minimal cover of $S$ using a set of pairwise-disjoint triangles 
from the basis $\mathcal{B}$, then we have an $O(1)$-approximation for the
original surface approximation problem.

\subsection{A Disjoint Cover Using Basis Triangles}
We now describe a QPTAS for the problem of computing the smallest 
pair-wise disjoint subset of $\B$ that covers $S$. 

\paragraph{The Separator.} 
We need the following separator computation, which is very similar 
to the constructions in \cite{AdamaszekW14,Har-Peled13,MRS14} and Step 1 of the separator theorem
for convex decomposition. Our separators will be closed, simple,
polygonal curves. For an edge $e$ on such a curve $C$, and for
a pairwise-disjoint subset $\D \subseteq \B$, let $n(e,\D)$ denote the
number of triangles in $\D$ whose relative interior is intersected by
$e$, and let $n(C,\D)$ denote $\sum_e n(e, \D)$, where the summation is
over all edges $e$ of $C$. 

\begin{lemma}
\label{lem:sep2}
Given $\B$, and $0 < \delta < 1$, we can compute in time $n^{O(1/\delta^2)}$
a family $\C = \{C_1, C_2, \ldots, C_t \}$ of closed, simple, polygonal 
curves, each with $O(1/\delta^2)$ vertices, with the following property:
for any subset $\D \subseteq \B$ with pairwise-disjoint triangles
such that $K := |\D| \geq \lambda := \frac{c\log 1/\delta}{\delta^3} $, there is a
$C_j \in \C$ such that (a) $n(C_j,\D) \leq \delta K/10$; (b) the number of triangles of $\D$ inside $C_j$ is at most $\frac{2K}{3}$; and (c) the number of
triangles of $\D$ outside $C_j$ is at most $\frac{2K}{3}$.
\end{lemma}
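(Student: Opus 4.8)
The plan is to mimic Step 1 of the proof of Lemma~\ref{lem:sep1}, which is precisely the $\varepsilon$-net/trapezoidal-decomposition construction of \cite{AdamaszekW14,Har-Peled13,MRS14}, but now applied directly to the basis triangles rather than to representative segments of a convex decomposition. Observe that the lemma statement is quantified in a particular way: the family $\C$ must be computed from $\B$ alone, \emph{before} seeing $\D$, and then for \emph{every} sufficiently large pairwise-disjoint $\D\subseteq\B$ some curve in $\C$ works. This is exactly what one gets from the standard argument, since the curves are built from the arrangement of $\B$, which does not depend on $\D$.

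First I would fix a bounding box $B$ containing all triangles of $\B$ and all points of $S$, and consider the planar arrangement obtained as follows: from each vertex $p$ of each triangle in $\B$, shoot vertical rays up and down until they hit another triangle edge of $\B$ or $\partial B$; equivalently, take the trapezoidal (vertical) decomposition of the segment set consisting of all edges of all triangles in $\B$. Since $|\B| = O(n^9)$, this arrangement has polynomially many vertices, edges, and faces, and can be computed in polynomial time. The separator curves we consider are simple closed polygonal cycles made of $O(1/\delta^2)$ edges of this arrangement; there are $n^{O(1/\delta^2)}$ such cycles and they can be enumerated in $n^{O(1/\delta^2)}$ time. This yields the family $\C$ with the claimed size and vertex-count bound; crucially $\C$ depends only on $\B$.

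Next, given any pairwise-disjoint $\D\subseteq\B$ with $K=|\D|\geq\lambda$, I would invoke the machinery of \cite{AdamaszekW14,Har-Peled13,MRS14}. Take a random sample of $\Theta(r\log r)$ triangles of $\D$ and form its trapezoidal decomposition; since the triangles in $\D$ have pairwise-disjoint interiors, an $\varepsilon$-net / random-sampling argument gives that, for a good sample, every vertical edge of this sub-arrangement crosses the relative interiors of at most $c_1 K/r$ triangles of $\D$ (a non-vertical edge lies on a single triangle's boundary and crosses no interior). The planar-separator argument of those papers then produces a simple polygonal cycle $\Sigma$, made of $O(\sqrt{r\log r})$ edges of this sub-arrangement, that has at most $2K/3$ triangles of $\D$ strictly inside and at most $2K/3$ strictly outside, with $\sum_{e\in\Sigma}n(e,\D) = O(\sqrt{r\log r})\cdot c_1K/r$. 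Choosing $r = \Theta(1/\delta^2)$ (so $\sqrt{r\log r}\cdot(1/r) = O(\delta/\log(1/\delta))\cdot\sqrt{\log(1/\delta)}$, which is $\le \delta/(10c_1)$ for the constant $c$ in $\lambda$ large enough) makes $n(\Sigma,\D)\le\delta K/10$, and forces $\Sigma$ to have $O(1/\delta^2)$ edges. Since every edge of $\Sigma$ is an edge of the full arrangement of $\B$ — the sample is a subset of $\B$, so its trapezoidal decomposition refines nothing that is not already consistent with $\B$'s arrangement after we observe that each such edge is determined by $O(1)$ triangles of $\B$ — the cycle $\Sigma$ is one of the $n^{O(1/\delta^2)}$ cycles in $\C$. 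Setting $C_j=\Sigma$ gives properties (a), (b), (c).

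The main obstacle, and the point that needs the most care, is the claim that the edges of $\Sigma$ — which a priori live in the trapezoidal decomposition of the random \emph{sample} — are edges of the arrangement of the full basis $\B$, so that $\Sigma\in\C$. A vertical edge of the sample's decomposition runs from a triangle vertex $p$ to the point where its vertical ray first meets a sample triangle, whereas in the full arrangement that ray would stop sooner, at some triangle of $\B\setminus\D$. The standard fix, as in Step 1 of Lemma~\ref{lem:sep1}, is to \emph{not} require the curves in $\C$ to be cycles of the full trapezoidal decomposition, but rather cycles built from vertical segments each of whose endpoints is specified by an $O(1)$-size tuple of input features (a triangle vertex of $\B$ together with the triangle whose extension the ray hits), plus non-vertical pieces lying on triangle edges of $\B$; there are $n^{O(1/\delta^2)}$ such cycles. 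Then $\Sigma$ is automatically in this family. I would state the construction of $\C$ in exactly this feature-tuple form from the outset to sidestep the issue cleanly, and otherwise the proof is a direct transcription of Step 1, so I would keep it brief and refer to that argument and to \cite{AdamaszekW14,Har-Peled13,MRS14} for the probabilistic and planar-separator details.
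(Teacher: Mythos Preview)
Your approach is essentially the one the paper has in mind: the paper does not give a separate proof of Lemma~\ref{lem:sep2} but simply states that it ``is very similar to the constructions in \cite{AdamaszekW14,Har-Peled13,MRS14} and Step~1 of the separator theorem for convex decomposition,'' and your write-up is precisely that transcription, including the feature-tuple description of the vertices of $\Sigma$ so that the family $\C$ depends only on $\B$.

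There is one genuine slip, however, in your parameter choice. With $r=\Theta(1/\delta^2)$ you compute $\sqrt{r\log r}\cdot(1/r)$ incorrectly: in fact
\[
\frac{\sqrt{r\log r}}{r}=\sqrt{\frac{\log r}{r}}=\Theta\bigl(\delta\sqrt{\log(1/\delta)}\bigr),
\]
not $O(\delta/\sqrt{\log(1/\delta)})$ as your parenthetical suggests. Hence $n(\Sigma,\D)=O\bigl(\delta\sqrt{\log(1/\delta)}\bigr)K$, which does \emph{not} imply $n(\Sigma,\D)\le\delta K/10$, and making ``the constant $c$ in $\lambda$ large'' does not remove the $\sqrt{\log(1/\delta)}$ factor. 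The fix is to take $r$ slightly larger---either $r=\Theta(\log(1/\delta)/\delta^2)$ or, as the paper does in Step~1, simply $r=c/\delta^3$; either choice gives $\sqrt{\log r}/\sqrt{r}=O(\delta)$ with room to spare and still yields $\sqrt{r\log r}=O(1/\delta^2)$ edges on $\Sigma$, matching the vertex bound in the lemma and the running time $n^{O(1/\delta^2)}$. Note that the requirement $\lceil r\log r\rceil\le K$ is exactly what forces the threshold $K\ge\lambda=\Theta(\log(1/\delta)/\delta^3)$. With this correction your argument goes through.
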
  

\paragraph{The Algorithm.}
We describe a recursive procedure $\compcover(S',\B')$ that given as input subsets
$S' \subseteq S$ and $\B' \subseteq \B$, returns a cover of $S'$ with a
set of pairwise disjoint triangles from $\B'$. We assume that $\B'$ has
the following closure property: if $\triangle \in \B'$, and $\triangle_1 \in \B$ is contained in $\triangle$, then $\triangle_1 \in \B'$ as well. 
Our final algorithm simply
invokes $\compcover(S,\B)$. Our algorithm $\compcover(S',\B')$ is non-deterministic in the step
in which it makes a choice of separator. After analyzing the quality of
the solution produced by this non-deterministic algorithm, for suitable
separator choices, we discuss how it can be made deterministic.

\begin{enumerate}
\item By exhaustive search, we check if there is a subset of $\B'$ with at most
$\lambda = c \frac{\log 1/\delta}{\delta^3}$ pairwise disjoint triangles that covers
$S'$. If so, we return such a subset with minimum cardinality. This is the
base case of our algorithm. Henceforth, we assume that a minimal pairwise-disjoint cover needs at least $\lambda = \frac{c\log 1/\delta}{\delta^3}$ triangles.

\item Choose a separator $C_j \in \C$.

\item Let $S'_{j1}$ be the set of points in $S'$ that are inside $C_j$, and
      let $S'_{j2}$ be the remaining points in $S'$. Let $\B'_{j1}$ denote those
      triangles of $\B'$ that are inside $C_j$, and $\B'_{j2}$ denote those 
      triangles of $\B'$ that are outside $C_j$.

\item Return $\compcover(S'_{j1},\B'_{j1}) \cup \compcover(S'_{j2},\B'_{j2})$.
\end{enumerate}

We note that  $\B'_{j1}$ and $\B'_{j2}$ satisfy the closure property that
$\B'$ has.   

\paragraph{Approximation Ratio.}

Consider an input $(S',\B')$ to our algorithm, and suppose $\D' \subseteq 
\B'$ is a smallest pairwise-disjoint subset of $\B'$ that covers $S'$. We
define the {\em level} of the instance  $(S',\B')$ to be the integer $i > 0$ 
such that $\lambda (4/3)^{i-1} < |\D'| \leq \lambda (4/3)^{i}$. If 
$|\D'| < \lambda$, we define its level to be $0$ -- thus a base case input
$(S',\B')$ has level $0$. The following lemma bounds the quality of 
approximation of our non-deterministic algorithm.

\begin{lemma}
\label{lem:approxratio}
Assume that $\delta < 3/4 - 2/3$. There is an instantiation of the 
non-deterministic separator choices for which $\compcover(S',\B')$
computes a disjoint cover of size at most $(1 + \delta)^i |\D'|$,
where $i$ is the level of $(S',\B')$, and $\D'$ is an optimal disjoint
subset of $\B'$ that covers $S'$.
\end{lemma}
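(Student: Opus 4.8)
The plan is to mirror the structure of the approximation-ratio argument for convex decomposition (Lemma~\ref{lem:approx1}): induct on the level $i$ of the instance $(S',\B')$. The base case $i = 0$ is immediate, since then $|\D'| < \lambda$ and step~1 of $\compcover$ finds, by exhaustive search, a minimum-cardinality pairwise-disjoint cover of $S'$ using at most $\lambda$ triangles from $\B'$; this cover has size exactly $|\D'| = (1+\delta)^0 |\D'|$. So assume $i \geq 1$ and that the statement holds for all instances of level at most $i - 1$.

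For the inductive step, I would fix the separator $C_j \in \C$ guaranteed by Lemma~\ref{lem:sep2} applied to the optimal disjoint cover $\D' \subseteq \B'$ (note $|\D'| \geq \lambda$ since the level is positive, so the lemma applies with $K = |\D'|$). The algorithm non-deterministically picks this $C_j$. Now I need to bound the optimal disjoint covers of the two subinstances $(S'_{j1}, \B'_{j1})$ and $(S'_{j2}, \B'_{j2})$. The key is to exhibit a \emph{single} disjoint subset of $\B'$ of size at most $(1+\delta)|\D'|$ that partitions cleanly across $C_j$ and covers all of $S'$. To build it: take the triangles of $\D'$ that lie strictly inside $C_j$ and those strictly outside; these are disjoint from $C_j$ and from each other, and by parts (b),(c) of Lemma~\ref{lem:sep2} there are at most $2|\D'|/3$ of each. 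The remaining triangles of $\D'$ are exactly those whose relative interior is crossed by some edge of $C_j$; there are at most $n(C_j,\D') \leq \delta |\D'|/10$ of them (a triangle crossed by $C_j$ contributes at least one to $n(C_j,\D')$). For each such "bad" triangle $\triangle$, the curve $C_j$ cuts it into at most $O(1/\delta^2)$ convex pieces (one more than the number of edges of $C_j$ crossing it), and I triangulate each piece; since $\sum_{\triangle \text{ bad}} (\text{edges of } C_j \text{ crossing } \triangle) = n(C_j, \D')$, the total number of new triangles produced this way is $O((1/\delta^2)\cdot \delta |\D'|/10) = O(|\D'|/\delta)$ — but this is too many. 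The right accounting is instead to charge per \emph{vertex–triangle incidence}: the pieces of all bad triangles together have a total vertex count of $O(|C_j| + n(C_j,\D'))$, and triangulating gives $O(|C_j| \cdot (\text{number of bad triangles}) + n(C_j,\D'))$ triangles; choosing the constant $c$ in $\lambda$ large enough relative to the $O(1/\delta^2)$ vertex bound of $C_j$, and using $|\D'| \geq \lambda$, this is at most $\delta |\D'|/3$, say. Here I must also verify that all these new small triangles lie in $\B'$: each is contained in a triangle of $\D' \subseteq \B'$, so by the closure property of $\B'$ it belongs to $\B'$. Since these new triangles partition the bad triangles, the overall family is pairwise disjoint, covers $S'$, respects $C_j$ (no triangle crosses $C_j$), and has size at most $2|\D'|/3 + 2|\D'|/3$... which double counts — rather, $|\D'_{\text{in}}| + |\D'_{\text{out}}| + (\text{new}) \leq |\D'| + \delta|\D'|/3 \leq (1+\delta)|\D'|$.

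This family, restricted to triangles inside $C_j$, is a disjoint cover of $S'_{j1}$ using triangles of $\B'_{j1}$, so the optimum $\D'_{j1}$ for $(S'_{j1},\B'_{j1})$ satisfies $|\D'_{j1}| \leq (\text{part inside}) \leq 2|\D'|/3 + \delta|\D'|/3 \leq (3/4)|\D'|$ using $\delta < 3/4 - 2/3$; likewise for $\D'_{j2}$. Hence both subinstances have level at most $i-1$, and $|\D'_{j1}| + |\D'_{j2}| \leq (1+\delta)|\D'|$ (the parts inside and outside sum to the whole family). By the inductive hypothesis there are non-deterministic choices for which $\compcover(S'_{jk},\B'_{jk})$ returns a cover of size at most $(1+\delta)^{i-1}|\D'_{jk}|$, so $\compcover(S',\B')$ returns a cover of size at most $(1+\delta)^{i-1}(|\D'_{j1}| + |\D'_{j2}|) \leq (1+\delta)^i |\D'|$, completing the induction.

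The main obstacle is the middle step: carefully bounding the number of extra basis triangles needed to "repair" the bad triangles so that the repaired cover both respects $C_j$ and still lives in $\B'$. Getting a bound of the form $\delta|\D'|/3$ (rather than something with a stray $1/\delta$ factor) requires charging to vertex–triangle incidences and exploiting $|\D'| \geq \lambda = \Theta((\log 1/\delta)/\delta^3)$ with a generous constant $c$; and it crucially relies on the closure property of $\B'$, which is why that hypothesis was built into $\compcover$. The rest is a routine induction paralleling Lemma~\ref{lem:approx1}.
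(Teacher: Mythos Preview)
Your high-level plan matches the paper's proof: induct on the level, pick the separator $C_j$ from Lemma~\ref{lem:sep2}, repair the triangles of $\D'$ crossed by $C_j$, and apply the inductive hypothesis to the two sides. But there is a genuine gap in the repair step. You claim that each small triangle produced by retriangulating a bad $\triangle \in \D'$ ``is contained in a triangle of $\D' \subseteq \B'$, so by the closure property of $\B'$ it belongs to $\B'$.'' That is not what the closure property says. The hypothesis on $\B'$ is: if $\triangle \in \B'$ and $\triangle_1 \in \B$ with $\triangle_1 \subseteq \triangle$, then $\triangle_1 \in \B'$. Your retriangulation pieces are \emph{arbitrary} triangles, not basis triangles; they need not lie in $\B$, and closure does not apply. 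The paper closes this gap by invoking the key property of the basis (the lemma giving $\B(\triangle)$): every valid triangle $\triangle$ admits at most four basis triangles $\B(\triangle) \subseteq \B$ contained in $\triangle$ and covering $S \cap \triangle$. Each retriangulation piece, being a subtriangle of a valid triangle, is itself valid, so one replaces it by $\B(\triangle)$; those four triangles are in $\B$ and sit inside the original bad triangle from $\B'$, so now closure yields membership in $\B'_{j1}$ or $\B'_{j2}$. This extra replacement costs a factor of $4$, which is where the paper's bound $|\Dnew'| \le 8\, n(C_j,\D')$ comes from.

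A secondary issue is the counting. The retriangulation of all bad triangles together has $O(n(C_j,\D'))$ triangles outright --- no additive $|C_j|$ term, and certainly no multiplicative $|C_j|$ factor as in your expression $O(|C_j|\cdot|\Dbad'| + n(C_j,\D'))$. Every vertex that appears (a corner of a bad triangle, a vertex of $C_j$ inside a bad triangle, or a crossing of an edge of $C_j$ with a bad-triangle edge) is charged to an edge--triangle incidence, of which there are exactly $n(C_j,\D')$; note also $|\Dbad'| \le n(C_j,\D')$. Hence after the factor-$4$ basis replacement one gets $|\Dnew'| \le O(1)\cdot n(C_j,\D') \le \delta K'$ directly from $n(C_j,\D') \le \delta K'/10$, with no appeal to $K' \ge \lambda$ needed. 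Your detour through $\lambda$ is papering over a miscount rather than overcoming a real obstacle.
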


\begin{proof}
The proof is by induction on $i$. The base case is when $i = 0$, and here the statement follows from the base case of the algorithm. So assume that $i > 1$, and
that the statement holds for instances with level at most $i - 1$. 

Let $K' = |\D'|$. Suppose that the algorithm picks a separator $C_j \in \C$
that satisfies the guarantees of Lemma \ref{lem:sep2} when applied to
$\D'$. With this choice of $C_j$, let $S'_{j1}$,$\B'_{j1}$ $S'_{j1}$, and
$\B'_{j1}$ denote the same sets as in the algorithm.

We will show that there are sets $\D'_1 \subseteq \B'_{j1}$ and $\D'_2 \subseteq \B'_{j2}$ such that (a) $\D'_1$ (resp. $\D'_2$) is a pairwise disjoint
cover of $S'_{j1}$ (resp. $S'_{j2}$); (b) $|D'_1| \leq (2/3 + \delta) K'$,
and $|D'_2| \leq (2/3 + \delta) K'$; and (c) $|D'_1| + |D'_2| \leq 
(1 + \delta) K'$. 

Since $|D'_1| \leq (2/3 + \delta) K' \leq 3K'/4$, the level of
$(S'_{j1},\B'_{j1})$ is at most $i - 1$. By the inductive hypothesis,
$\compcover(S'_{j1},\B'_{j1})$ returns a solution of size at most 
$(1 + \delta)^{i-1}|D'_1|$. By the same reasoning, 
$\compcover(S'_{j2},\B'_{j2})$ returns a solution of size at most 
$(1 + \delta)^{i-1}|D'_2|$. It follows that the size of the
solution returned by  $\compcover(S',\B')$ is at most
\[ (1 + \delta)^{i-1}(|D'_1| + |D'_2|) \leq (1 + \delta)^i K'.\]

It remains to construct the sets $D'_1$ and $D'_2$. Let $\Dbad'$ denote
the set of those triangles in $\D'$ whose relative interiors are intersected
by $C_j$. Initialize a set $\Dnew'$. Take each triangle in $\Dbad'$, and retriangulate it so that the relative interior of each of the new triangles is not
intersected by $C_j$. In the retriangulation, the total number of triangles,
over all of $\Dbad'$, is proportional to $n(C_j,\D')$. For
each new triangle $\triangle$ of the retriangulation, add the set of at most
four pairwise disjoint basis triangles in $\B(\triangle)$ to $\Dnew'$ -- these four triangles
cover $S \cap \triangle$. We calculate that
$|\Dnew'| \leq 8 n(C_j, \D') \leq \delta K'$. Let $D'_1$ consist of those
triangles in $D' \setminus \Dbad'$ that are inside $C_j$ and those
triangles in $\Dnew'$ that are inside $C_j$. Thus we have $|D'_1| \leq
2 K'/3 + |\Dnew'| \leq (2/3 + \delta) K'$. Since $D' \setminus \Dbad' 
\cup \Dnew'$ covers $S'$, it follows that $D'_1$ covers $S'_{j1}$. Since
$\Dnew' \subseteq \B'$ (the closure property), it follows that $D'_1 \subseteq \B'_{j1}$. 

Similarly, letting  $D'_2$ consist of those
triangles in $D' \setminus \Dbad'$ that are outside $C_j$ and those
triangles in $\Dnew'$ that are outside $C_j$, we can establish similar 
properties for $\D'_2$.  Finally,
\[ |\D'_1| + |\D'_2| \leq |\D'| + |\Dnew'| \leq (1 + \delta) K'.\]
\end{proof} 

\paragraph{Deterministic Algorithm.}

The level of the input $(S, \B)$, where $S$ is the original set of points and
$\B$ the set of basis triangles, is clearly at most $n$, the size of $S$. It
follows that with $\compcover(S,\B)$, for suitable non-deterministic 
separator choices, returns a disjoint cover of size at most $(1 + \delta)^
{\lceil \log_{4/3}n \rceil}$ times the size of the optimal disjoint cover. Furthermore, the depth of the recursion with
such seperator choices is at most ${\lceil \log_{4/3}n \rceil}$.

To get a deterministic algorithm, we make the following natural changes.
If a call to $\compcover($ $S',\B')$ is at a recursion depth that is greater
than ${\lceil \log_{4/3}n \rceil}$ (with respect to the root corresponding to
$\compcover(S,\B)$), we return a special symbol $I$.  
In the $\compcover(S',\B')$ routine, when we are not in the base case, 
we try all possible separators $C_j \in \C$ instead of nondeterministically 
guessing one -- we return the smallest sized set $\left( \compcover(S'_{j1},\B'_{j1}) \cup \compcover(S'_{j2},\B'_{j2}) \right)$, over all $j$ for which neither of the the two recursive calls
returns $I$. If no such $j$ exists, $\compcover(S',\B')$ returns $I$.

With these changes, $\compcover(S,\B)$ is now a deterministic algorithm
that returns a disjoint cover of size at most $(1 + \delta)^
{\lceil \log_{4/3}n \rceil}$ times the size of the optimal disjoint cover. 
Its running time is 
\[ \left( n^{O(1/\delta^2)} \right)^{{\lceil \log_{4/3} n \rceil}} \cdot
   n^{\lambda} = n^{O\left( (\log n + \log 1/\delta)/\delta^3 \right)} .\]

Plugging $\delta = 1/{\lceil \log_{4/3}n \rceil}$, the approximation
guarantee for disjoint cover is $O(1)$ and the running time is
$n^{O(\log^4 n)}$. We can thus conclude with our main result for
surface approximation:

\begin{theorem}
There is an algorithm that, given inputs $\Sb$ and $\mu$ to the surface
approximation problem, runs in time $n^{O(\log^4 n)}$ and returns a solution
with complexity that is at most $O(1)$ times that of the optimal solution.
Here, $n$ is the number of points in $\Sb$.
\end{theorem}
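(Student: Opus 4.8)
The plan is to assemble the ingredients already in place: build the basis $\B$, use it to reduce surface approximation to computing a minimum pairwise-disjoint cover of $S$ by triangles of $\B$, solve that covering problem with the deterministic version of $\compcover$, and translate the resulting cover back into a surface. First I would run the (polynomial-time) basis construction and filtering to obtain $\B$, a set of $O(n^9)$ valid triangles. Note that $\B$ trivially satisfies the closure property demanded by $\compcover$ — if $\triangle \in \B$ and $\triangle_1 \in \B$ is contained in $\triangle$, then certainly $\triangle_1 \in \B$ — so $\compcover(S,\B)$ is a legal invocation; and, as noted in the algorithm description, the sub-instances $(S'_{j1},\B'_{j1})$ and $(S'_{j2},\B'_{j2})$ inherit this property, so the recursion is well founded throughout.

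Next I would fix the parameters and read off the approximation guarantee. Let $OPT$ be the complexity of an optimal surface approximation and let $\D^\star \subseteq \B$ be a smallest pairwise-disjoint subset of $\B$ covering $S$. By the lemma that builds $\bigcup_{\triangle \in \mathcal{T}'} \B(\triangle)$ from the projected faces of the optimal surface, $|\D^\star| \le 4\,OPT$; moreover, for every $p \in S$ the degenerate triangle $\{p\}$ lies in $\B$ and survives filtering, so $n$ such singletons give $|\D^\star| \le n$, whence the level of the instance $(S,\B)$ is at most $\lceil \log_{4/3} n \rceil$. Running the deterministic $\compcover(S,\B)$ with $\delta = 1/\lceil \log_{4/3} n \rceil$, Lemma~\ref{lem:approxratio} together with the Deterministic Algorithm discussion guarantees a pairwise-disjoint cover of $S$ of size at most $(1+\delta)^{\lceil \log_{4/3} n \rceil} |\D^\star| \le e \cdot 4\,OPT = O(OPT)$. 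Since basis triangles are valid, this is a pairwise-disjoint cover by triangles of $\mathcal{T}$, so the lemma of Agarwal and Suri converts it in polynomial time into a surface approximation of complexity $O(OPT)$.

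Finally I would bound the running time: the basis construction, the filtering, and the final conversion are all polynomial in $n$, while by the recursion-depth bound $\lceil \log_{4/3} n \rceil$ and Lemma~\ref{lem:sep2} the deterministic $\compcover(S,\B)$ runs in time $\bigl( n^{O(1/\delta^2)} \bigr)^{\lceil \log_{4/3} n \rceil} \cdot n^{\lambda}$ with $\lambda = c\log(1/\delta)/\delta^3$, and substituting $\delta = 1/\lceil \log_{4/3} n \rceil$ gives $n^{O(\log^4 n)}$, which dominates the polynomial overhead. Essentially all of the genuine difficulty has already been discharged — in the separator construction of Lemma~\ref{lem:sep2}, in the inductive analysis of Lemma~\ref{lem:approxratio}, and in the basis lemmas — so what remains for the theorem is bookkeeping: verifying the closure property for $\B$, checking that neither direction of the reduction costs more than a constant factor, and confirming that the level (hence the exponent) is $O(\log n)$. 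The one point needing care is that the constant hidden in ``$O(OPT)$'' must not accumulate across the $O(\log n)$ recursion levels; it does not, because Lemma~\ref{lem:approxratio} bounds the per-level blow-up by the multiplicative factor $(1+\delta)$ and $\delta$ is chosen precisely so that the product over all levels stays bounded by a constant.
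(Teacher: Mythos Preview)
Your proposal is correct and follows essentially the same route as the paper: reduce to the disjoint-cover problem via the basis $\B$, run the deterministic $\compcover$ with $\delta = 1/\lceil \log_{4/3} n\rceil$, and convert back via the Agarwal--Suri lemma. The only difference is cosmetic---you justify the level bound $|\D^\star|\le n$ explicitly via the degenerate singleton triangles, whereas the paper simply asserts it.
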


\section{Discussion}
Consider the version of the convex decomposition problem where we are allowed
to add Steiner vertices. Can we obtain a QPTAS for this version of the problem?
Unlike the diagonal-based version we have considered, the separator for this version does not have to be made up of diagonals. In this sense, the Steiner version 
is simpler. The complication in the Steiner version is that we do not
know about the location of the Steiner points. Some way to bound their 
locations is needed to obtain, within a reasonable time bound, a suitable
separator family. In our surface approximation problem, we avoid confronting
this problem by losing a constant factor and reducting to the disjoint cover
problem.

In the convex decomposition problem, one often wants the convex pieces to
satisfy some additional criterion -- such as having an area that is at most
a specified quantity. For a diagonal based
decomposition, our separator construction goes through without modifications.
However, it is not clear if we can argue that
there is a near-optimal decomposition that respects the separator. This is
because the construction of the near-optimal decomposition needs to maintain
the additional criterion. It would be interesting to see if the argument
can be made to go through for certain criteria.

We hope that our work inspires progress along both these directions.

\bibliographystyle{plain}
\bibliography{SurfaceDecomp}

\end{document}